\tikzset{
->-/.style = {
very thick,
decoration = {
markings,
mark = at position 0.7 with {\arrow{latex}}
},
postaction = {decorate}
}
}
\numberwithin{equation}{section}
\newtheorem{theorem}{Theorem}
\newtheorem{lemma}{Lemma}
\newtheorem{proposition}{Proposition}
\let\cite=\cites
\let\leq=\leqslant
\newcommand{\bra}[1]{\langle #1 \rvert}
\newcommand{\ket}[1]{\lvert #1 \rangle}
\DeclareMathOperator{\qDet}{\mathrm{Det}_q}
\def\Ccal{\mathcal{C}}
\def\Hcal{\mathcal{H}}
\def\Vcal{\mathcal{V}}
\def\Cbb{\mathbb{C}}
\def\kUa{\ket{\Uparrow}}
\def\bDa{\bra{\Downarrow}}
\newcommand{\rmi}{\mathrm{i}}
\newcommand{\rme}{\mathrm{e}}
\newcommand{\wt}{\widetilde}
\begin{document}
\title{Construction of determinants
for the six-vertex model \\ with domain wall boundary conditions}

\author{Mikhail D. Minin}
\email{mmd@pdmi.ras.ru}

\author{Andrei G. Pronko}
\email{agp@pdmi.ras.ru}

\author{Vitaly O. Tarasov}
\email{vt@pdmi.ras.ru}

\address{Steklov Mathematical Institute,
Fontanka 27, St. Petersburg, 191023, Russia}

\keywords{Bethe ansatz, vertex models, domain wall boundary conditions,
determinant representations}

\begin{abstract}
We consider the problem of construction of determinant formulas for the
partition function of the six-vertex model with domain wall boundary conditions.
In pioneering works of Korepin and Izergin a determinant formula was proposed and proved
using a recursion relation. In later works, another determinant formulas were given
by Kostov for the rational case and by Foda and Wheeler for the trigonometric case.
Here, we develop an approach in which the recursion relation is replaced by
a system of algebraic equations with respect to one set of spectral parameters.
We prove that this system has a unique solution. The result can be easily given as a determinant
parametrized by an arbitrary basis of polynomials. In particular,
the choice of the basis of Lagrange polynomials with respect to the second set of spectral
parameters leads to the Izergin--Korepin representation, and the choice of the monomial basis leads
to the Kostov and Foda--Wheeler representations.
\end{abstract}

\maketitle
\tableofcontents

\section{Introduction}

The six-vertex model with domain wall boundary conditions (DWBC) was
introduced by Korepin in \cite{K-82} for calculation of norms of Bethe states
in the framework of the quantum inverse scattering method (QISM) 
\cite{TF-79,KBI-93}. Korepin showed that the partition function can be uniquely
determined by a list of conditions, and among those a key role is played by a
certain recursion relation. A milestone result was later obtained by Izergin
in \cite{I-87} where he obtained a determinant formula satisfying those
conditions. A detailed exposition can be found in \cite{ICK-92}.

In more recent studies, it was found that the six-vertex model with DWBC and
their generalizations, the so-called partial DWBC (pDWBC), appear in the
route of calculation of structure constants (three-point functions) in
$\mathcal{N}=4$ supersymmetric Yang--Mill theory 
\cite{EGSV-11a,EGSV-11b,GSV-12,F-12,JKKS-16}. 
In the context of these studies, and
on the grounds of Slavnov's determinant for the scalar product of Bethe
states \cite{S-89}, Kostov in \cite{K-12a,K-12b} obtained a determinant
formula for the partition function of the six-vertex model with pDWBC for the
case of rational Boltzmann weights. The Kostov's result appeared to be
intriguing, since when pDWBC are specified to DWBC it remains different from
the Izergin--Korepin representation.

Further progress was made by Foda and Wheeler in \cite{FW-12}, who argued that
the six-vertex model with pDWBC in the case of the rational weights can be
treated starting from the one with DWBC, by sending several lattice
parameters to infinity. In this procedure, Izergin--Korepin representation
leads to a determinant formula which is different from the Kostov's one, but
it turns out possible to prove that these two representations are equivalent
to each other. Foda and Wheeler have also provided a trigonometric
generalization of the Kostov's formula, albeit for a severe restriction on
the Boltzmann weights, which must be chosen with broken arrow-reversal
symmetry and obeying the stochasticity condition. As we explain in the
sequel, for DWBC these restrictions can be lifted and the formula can be used
for generic weights.

In the present paper, we propose a modified version of the original approach
of Korepin and Izergin, where we replace the Korepin's recursion relation by
a system of equations with respect to one (out of two) sets of spectral
parameters. We prove that this system of equations, supplemented by the
requirements of symmetry and polynomiality, has a unique solution that equals
the partition function of the six-vertex model with DWBC, see
Propositions \ref{prop:props-rat}\,--\,\ref{theorem:dim-trig}. On the other
hand, the solution can be easily written down as a determinant parametrized
by an arbitrary basis of polynomials. For clarity, we treat the rational
and trigonometric cases separately; our main results are stated in
Theorem~\ref{theorem:det-rat} (see Sect.~\ref{sec42}) for the rational case,
and in Theorem~\ref{theorem:det-trig} (see Sect.~\ref{sec52}),
for the trigonometric case. The proofs of the uniqueness statements,
Propositions \ref{theorem:dim-rat} and \ref{theorem:dim-trig},
follow the idea used in \cite{NPT-97}.

Choosing particular bases of polynomials, we recover previously known
representations for the partition functions. For instance, we obtain the
Izergin--Korepin representation by taking the basis of Lagrange interpolating
polynomials with respect to the second set of spectral parameters. On the
other hand, picking up the basis of monomials leads to the Kostov and
Foda--Wheeler representations.

Although we restrict ourselves here only to treating the six-vertex model with
DWBC, we believe that it can be extended to pDWBC as well. Indeed, at least
for the rational case this can be done straightforwardly, while the
trigonometric case looks subtle and requires more thorough analysis.
It is useful to mention that for the trigonometric case of the model with 
pDWBC in the limit where the lattice domain
become a semi-infinite strip shape, the partition function can be given in
terms of a Pfaffian \cite{PP-19}. Another directions to be explored concern
more general fixed boundary conditions and calculation of correlation
functions in these models \cite{R-11,PR-19,MP-21a,MP-21b,CGP-21}.

We organize our paper as follows. In the next section, we recall definition of
the six-vertex model with DWBC and list known determinant representations for
the partition function. In Sect.~\ref{sec:3}, we formulate the model within
the QISM formalism and provide various important ingredients which fix
defining properties of the partition function. In Sect.~\ref{sec:4} we study
the partition function in the rational case; our main result is the
determinant representation which depends on an arbitrary basis of
polynomials. In Sect.~\ref{sec:5} we give a generalization of this result to
the trigonometric case.

\section{The six-vertex model with DWBC}
\label{sec:2}

The six-vertex model or the ice-type model, originally is defined in terms of
arrows placed on edges of a square lattice. On the vertical edges the arrows
point up or down and on the horizontal edges the arrows point left or right.
Vertex configurations must contain equal numbers of incoming and outgoing
arrows around each lattice vertex. The six allowed vertex
configurations, in the standard order (following the conventions adopted
in \cite{B-82}), are given in Fig.~\ref{fig:six-vertices}.

\begin{figure}
\centering
\begin{tikzpicture}[scale = 0.7, thick]
	\draw [semithick] (0, 1) -- ++(2, 0);
	\draw [semithick] (1, 0) -- ++(0, 2);
	\draw [->-] (0, 1) -- ++(1, 0);
	\draw [->-] (1, 1) -- ++(1, 0);
	\draw [->-] (1, 0) -- ++(0, 1);
	\draw [->-] (1, 1) -- ++(0, 1);
	\node at (1, -1) {$w_1$};
\end{tikzpicture}
\quad
\begin{tikzpicture}[scale = 0.7, thick]
	\draw [semithick] (0, 1) -- ++(2, 0);
	\draw [semithick] (1, 0) -- ++(0, 2);
	\draw [->-] (1, 1) -- ++(-1, 0);
	\draw [->-] (2, 1) -- ++(-1, 0);
	\draw [->-] (1, 1) -- ++(0, -1);
	\draw [->-] (1, 2) -- ++(0, -1);
	\node at (1, -1) {$w_2$};
\end{tikzpicture}
\quad
\begin{tikzpicture}[scale = 0.7, thick]
	\draw [semithick] (0, 1) -- ++(2, 0);
	\draw [semithick] (1, 0) -- ++(0, 2);
	\draw [->-] (0, 1) -- ++(1, 0);
	\draw [->-] (1, 1) -- ++(1, 0);
	\draw [->-] (1, 1) -- ++(0, -1);
	\draw [->-] (1, 2) -- ++(0, -1);
	\node at (1, -1) {$w_3$};
\end{tikzpicture}
\quad
\begin{tikzpicture}[scale = 0.7, thick]
	\draw [semithick] (0, 1) -- ++(2, 0);
	\draw [semithick] (1, 0) -- ++(0, 2);
	\draw [->-] (1, 1) -- ++(-1, 0);
	\draw [->-] (2, 1) -- ++(-1, 0);
	\draw [->-] (1, 0) -- ++(0, 1);
	\draw [->-] (1, 1) -- ++(0, 1);
	\node at (1, -1) {$w_4$};
\end{tikzpicture}
\quad
\begin{tikzpicture}[scale = 0.7, thick]
	\draw [semithick] (0, 1) -- ++(2, 0);
	\draw [semithick] (1, 0) -- ++(0, 2);
	\draw [->-] (0, 1) -- ++(1, 0);
	\draw [->-] (2, 1) -- ++(-1, 0);
	\draw [->-] (1, 1) -- ++(0, -1);
	\draw [->-] (1, 1) -- ++(0, 1);
	\node at (1, -1) {$w_5$};
\end{tikzpicture}
\quad
\begin{tikzpicture}[scale = 0.7, thick]
	\draw [semithick] (0, 1) -- ++(2, 0);
	\draw [semithick] (1, 0) -- ++(0, 2);
	\draw [->-] (1, 1) -- ++(-1, 0);
	\draw [->-] (1, 1) -- ++(1, 0);
	\draw [->-] (1, 0) -- ++(0, 1);
	\draw [->-] (1, 2) -- ++(0, -1);
	\node at (1, -1) {$w_6$};
\end{tikzpicture}
\caption{Allowed vertex configurations of the six-vertex model
and their Boltzmann weights}
\label{fig:six-vertices}
\end{figure}
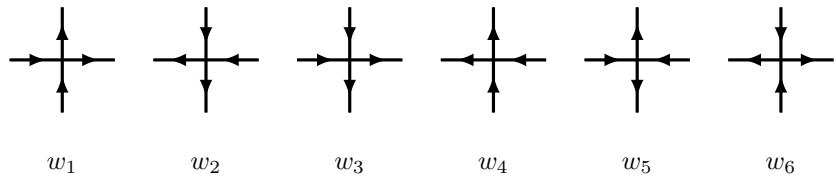

Domain wall boundary conditions (DWBC) can be defined for the six-vertex model
on a finite domain obtained by intersection of $N$ horizontal and $N$
vertical lines, the so-called $N \times N$ lattice. The DWBC mean that all
arrows on the top and bottom boundaries are incoming, on the left and right
boundaries all arrows are outgoing, see Fig.~\ref{fig:DWBC}.

To use the quantum inverse scattering method (QISM) in our study
(for a detailed exposition of the QISM, see, e.g., \cite{KBI-93}), we consider
the inhomogeneous version of the model in which the Boltzmann weights are taken
to be site-dependent in the following way. We label vertical lines from right
to left and horizontal lines from top to bottom. We assign variables
$\lambda_j$, $j=1,\ldots,N$, to the vertical lines and variables $\nu_k$,
$k=1,\ldots,N$, to the horizontal lines. The Boltzmann weight of the vertex
of the $i$th type lying at the intersection of the $j$th vertical and $k$th horizontal
lines depends only on the variables corresponding to these lines: 
$w_i=w_i(\lambda_j, \nu_k)$.

The partition function, usually denoted as $Z_N$, is defined as follows:
\begin{equation*}
Z_N=
\sum_{\Ccal \in \Omega} \prod_{j,k = 1}^{N}
w_{i(\Ccal,j,k)}(\lambda_j,\nu_k).
\end{equation*}
Here, $\Omega$ denotes the set of all allowed configurations
obeying the DWBC, and $i(\Ccal,j,k)$ is the type of the vertex
in the configuration $\Ccal$
at the intersection of the $j$th vertical and $k$th horizontal lines.

In this paper, we consider the Boltzmann weights
invariant under the reversal of all arrows,
\begin{equation*}
w_1 = w_2=:a,\qquad
w_3 = w_4=:b,\qquad
w_5 = w_6=:c.
\end{equation*}
In the rational case, the Boltzmann weights are
\begin{equation}
\label{eq:weights-rat}
a (\lambda, \nu) = \lambda - \nu + 1 ,
\qquad
b (\lambda, \nu) = \lambda - \nu,
\qquad
c = 1
\end{equation}
and in the trigonometric case are
\begin{equation}
\label{eq:weights-trig}
a (\lambda, \nu)
= \sin \gamma(\lambda - \nu+1),
\qquad
b (\lambda, \nu)
= \sin \gamma(\lambda - \nu) ,
\qquad
c = \sin\gamma.
\end{equation}

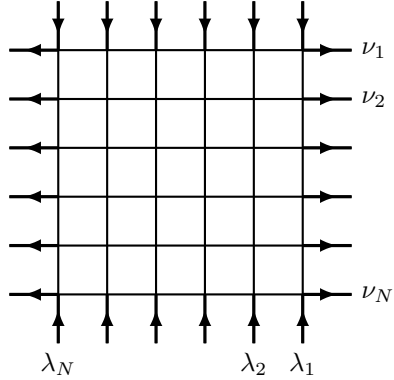
\begin{figure}
\centering
\begin{tikzpicture}[scale = 0.65, thick]
	\def\n{6}
	\def\d{0.2}	
	\foreach \x in {1, ..., \n}{
		\draw (\x, 0) -- ++(0, \n + 1);
		\draw[->-] (\x, 0) -- ++(0, 1);
		\draw[->-] (\x, \n + 1) -- ++(0, -1);
	}
	\foreach \y in {1, ..., \n}{
		\draw (0, \y) -- ++(\n + 1, 0);
		\draw[->-] (\n, \y) -- ++(1, 0);
		\draw[->-] (1, \y) -- ++(-1, 0);
	}
	\foreach \x in {1, 2}{
		\node at (\n + 1 - \x, 0) [below] {$\lambda_\x$};
		\node at (\n + 1, \n + 1 - \x) [right] {$\nu_\x$};
	}
	\node at (1, 0) [below] {$\lambda_N$};
	\node at (\n + 1, 1) [right] {$\nu_N$};
\end{tikzpicture}
\caption{An
$N \times N$
lattice with DWBC:
all arrows on the top and bottom boundaries are incoming and
those on the left and right boundaries are outgoing}
\label{fig:DWBC}
\end{figure}

An explicit formula for the partition function
via the determinant of an $N \times N$ matrix
was derived in \cite{ICK-92} for both the rational and trigonometric cases,
\begin{multline}\label{eq:Z-IK}
Z_N =(-1)^{\frac{N(N-1)}{2}}
\frac{\prod_{j,k = 1}^N
a (\lambda_j, \nu_k) b (\lambda_j, \nu_k) }
{v_N(\{\lambda\})\, v_N(\{\nu\})}
\\ \times
\det
\left[\frac{c }{a(\lambda_j,\nu_k)b(\lambda_j,\nu_k)}\right]_{j,k=1,\ldots,N}.
\end{multline}
Here, $\{\lambda\}$ and $\{\nu\}$ stand for ordered sets
$\{\lambda\}\equiv \lambda_1,\ldots,\lambda_N$ and
$\{\nu\}\equiv \nu_1,\ldots,\nu_N$, and the factors
$v_N(\{\lambda\})$ and $v_N(\{\nu\})$ denote the ``Vandermonde determinants''
over these sets, namely, in the rational case
\begin{equation}\label{Vander-rat}
v_N(\{\lambda\})
= \prod_{1 \leq j < k \leq N} (\lambda_k - \lambda_j),
\end{equation}
and in the trigonometric case
\begin{equation}\label{Vander-trig}
v_N(\{\lambda\})
= \prod_{1 \leq j < k \leq N}\sin \gamma (\lambda_k - \lambda_j).
\end{equation}
Note that the factor $(-1)^{\frac{N(N-1)}{2}}$ in \eqref{eq:Z-IK}
can be canceled out by reversing the order of the double product in one of the two
Vandermonde factors in the denominator.

In papers \cite{K-12b,K-12a}, Kostov gave
another determinant formula for the partition function
in the rational case (see also \cite{FW-12}):
\begin{equation}\label{eq:Z-K}
Z_N = \frac{\prod_{j,k = 1}^{N}
a(\lambda_j,\nu_k) }{v_N(\{\lambda\})}
\det \left[
\lambda_j^{k - 1} - (\lambda_j + 1)^{k - 1}
\prod_{l = 1}^N\frac{b(\lambda_j,\nu_l)}{a(\lambda_j,\nu_l)}\right]_{j,k=1,\ldots,N}.
\end{equation}

A trigonometric version of formula \eqref{eq:Z-K} can be extracted from
the result of Foda and Wheeler \cite{FW-12} who proposed
a determinant formula for the six-vertex model with pDWBC with asymmetric weights restricted by
the stochastisity condition (see formula (4.19) therein and also discussion in \cite{BL-15}).
For DWBC however, one can ignore these restrictions on weights because
the numbers $n_i$ of vertices of types $i=1,\ldots,6$ in each configuration
are related by $n_1=n_2$, $n_3=n_4$, and $n_5+N=n_6$, hence the asymmetry can be removed.
The Foda--Wheeler representation specialized to DWBC and recast in our
notation yields the following formula for the partition function:
\begin{multline}\label{eq:Z-K-trig}
Z_N =\frac{\prod_{j,k=1}^{N}a(\lambda_j,\nu_k)}{(2\rmi)^{N(N-1)/2}\,v_N(\{\lambda\})}
\prod_{j=1}^{N}\rme^{-(N-2)\rmi\gamma\lambda_j-\rmi\gamma\nu_j}
\\ \times
\det \left[\rme^{2\rmi\gamma\lambda_j(k-1)}\left(1-
\rme^{\rmi \gamma (2k-N)}
\prod_{l=1}^{N} \frac{b(\lambda_j,\nu_l)}{a(\lambda_j,\nu_l)}\right)
\right]_{j,k=1,\ldots,N}.
\end{multline}
It seems that this formula is new in the context of the DWBC, and
we stress again that it is valid for arbitrary weights.
Furthermore, we also establish an analogous formula
\begin{multline}\label{eq:Z-K-trig2}
Z_N =\frac{\prod_{j,k=1}^{N}a(\lambda_j,\nu_k)}{(2\rmi)^{N(N-1)/2}\,v_N(\{\lambda\})}
\prod_{j=1}^{N}\rme^{-N\rmi\gamma\lambda_j+\rmi\gamma\nu_j}
\\ \times
\det \left[\rme^{2\rmi\gamma\lambda_j(k-1)}\left(1-
\rme^{\rmi \gamma (2k-2-N)}
\prod_{l=1}^{N} \frac{b(\lambda_j,\nu_l)}{a(\lambda_j,\nu_l)}\right)
\right]_{j,k=1,\ldots,N}.
\end{multline}

Below we derive general determinant representations of
the partition function in both rational and trigonometric cases.
The expressions above will arise under specializations of the
polynomial bases entering our representations.

\section{QISM and a basis of the Bethe vectors}
\label{sec:3}

In this section, we introduce standard objects of QISM
and formulate a proposition concerning some useful basis
of the (off-shell) Bethe vectors. We also show that there exists a family
of null Bethe vectors, which provide a system of equations for the partition function.

\subsection{QISM formulation}

We start with the consideration of the vector space $\Cbb^2$ with a standard basis
\begin{equation}
\ket{\uparrow} =
\begin{pmatrix}
1 \\ 0
\end{pmatrix},
\qquad
\ket{\downarrow} =
\begin{pmatrix}
0 \\ 1
\end{pmatrix}
\end{equation}
that we denote as spin-up and spin-down states, respectively.
The states on the edges correspond to the spin states as follows:
\begin{equation}
\begin{aligned}
\uparrow, \rightarrow
\quad \Leftrightarrow \quad
& \ket{\uparrow} \text{ or } \bra{\uparrow},
\\
\qquad \downarrow, \leftarrow
\quad \Leftrightarrow \quad
& \ket{\downarrow} \text{ or } \bra{\downarrow},
\end{aligned}
\end{equation}
where the bra-vectors are corresponding row vectors
\begin{equation}
\bra{\uparrow} = \begin{pmatrix} 1 & 0 \end{pmatrix},
\qquad
\bra{\downarrow} = \begin{pmatrix} 0 & 1 \end{pmatrix}.
\end{equation}
We associate vector space $\Cbb^2$ to each vertical
and horizontal line of the lattice.

We introduce a quantum $L$-operator which can be defined
as matrix of Boltzmann weights.
We associate operator $L_{j k} (\lambda_j, \nu_k)$ with the vertex being
at the intersection of the $j$th vertical and the $k$th horizontal lines.
This operator acts non-trivially in the direct product of two vector spaces.
The first one is the ``vertical'' space $\Vcal_j = \Cbb^2$,
associated with the $j$th vertical line,
and the second one is the ``horizontal'' space $\Hcal_k = \Cbb^2$,
associated with the $k$th horizontal line.
We regard the ``vertical'' space $\Vcal_j$ as auxiliary
and the ``horizontal'' space $\Hcal_k$ as quantum.
Graphically,
the $L$-operator acts from top to bottom and from right to left.
In other words, arrow states on the top and right edges of the vertex
correspond to the ``in'' indices of the matrix of the $L$-operator,
and arrow states on the bottom and left edges
correspond to the ``out'' indices.

The $L$-operator is given by
\begin{equation}\label{eq:L-rat}
L_{jk} (\lambda,\nu) =
\begin{pmatrix}
a(\lambda,\nu)\pi_k^{+}+b(\lambda,\nu) \pi_k^{-} & c\sigma_k^{-}
\\
c\sigma_k^{+} & b(\lambda,\nu)\pi_k^{+} + a(\lambda,\nu)\pi_k^{-}
\end{pmatrix}_{[\mathcal{V}_j]}.
\end{equation}
Here, $\sigma^\pm_k$ and $\pi^\pm_k$ denote basis operators
acting in the space $\Hcal_k$,
\begin{equation}
\sigma^+=
\begin{pmatrix} 0&1\\0&0
\end{pmatrix},\quad
\sigma^-=
\begin{pmatrix} 0&0\\1&0
\end{pmatrix},\quad
\pi^+=
\begin{pmatrix} 1&0\\0&0
\end{pmatrix},\quad
\pi^-=
\begin{pmatrix} 0&0\\0&1
\end{pmatrix},
\end{equation}
and the superscript $[\mathcal{V}_j]$ in \eqref{eq:L-rat} indicates
that this is a matrix in the space $\Vcal_j$.

Now we construct the monodromy matrix $T_j (\lambda_j)$,
multiplying $L$-operators along the $j$th vertical line,
\begin{equation}\label{Tmatrix}
T_j (\lambda_j) =
L_{j N} (\lambda_j, \nu_N)
\cdots
L_{j 2} (\lambda_j, \nu_2)
L_{j 1} (\lambda_j, \nu_1)
=
\begin{pmatrix}
A(\lambda_j) & B(\lambda_j) \\
C(\lambda_j) & D(\lambda_j) \\
\end{pmatrix}_{[\mathcal{V}_j]}.
\end{equation}
Operators $A (\lambda)$, $B (\lambda)$, $C (\lambda)$ and $D (\lambda)$
act in the total quantum space $\Hcal = \otimes_{k = 1}^{N} \Hcal_k$. Note that
these operators depend also on the parameters $\nu_1, \dots, \nu_N$,
that is $A (\lambda)=A (\lambda;\nu_1, \dots, \nu_N)$, etc. Although this
dependence is important, we omit it in the notation.

To express the partition function in the operator form, we note
that the boundary conditions on the right boundary correspond to the vacuum-state
\begin{equation}
\ket{\Uparrow} := \ket{\uparrow_N\dots\uparrow_1}
:=\ket{\uparrow_N}\otimes \cdots \otimes \ket{\uparrow_1},
\end{equation}
where $\ket{\uparrow_k}$ is the basis vector in the space $\Hcal_k$, $k=1,\ldots,N$. 
The boundary conditions on the left boundary are described by the bra-vector
\begin{equation}
\bra{\Downarrow} :=\bra{\downarrow_N\ldots \downarrow_1}.
\end{equation}
The partition function reads:
\begin{equation}
\label{eq:Z-mel}
Z_N =\bDa B (\lambda_N) \cdots B (\lambda_2) B (\lambda_1)\kUa.
\end{equation}

Notice that the ``all spins up'' state
$\ket{\Uparrow}$ is annihilated by the operator $C(\lambda)$, 
\begin{equation}\label{CUp0}
C(\lambda)\ket{\Uparrow}=0.
\end{equation}
It is an eigenstate of the operators $A (\lambda)$ and $D (\lambda)$,
\begin{equation}\label{eq:a-and-d-def}
A (\lambda) \kUa = a (\lambda) \kUa,
\qquad
D (\lambda) \kUa = d (\lambda) \kUa,
\end{equation}
with the eigenvalues
\begin{equation}\label{eq:a-and-d}
a (\lambda)= \prod_{j=1}^{N} a (\lambda, \nu_j), \qquad
d (\lambda)= \prod_{j=1}^{N} b (\lambda, \nu_j).
\end{equation}
We will use these properties later.
\subsection{The Yang-Baxter algebra}

The key relation of QISM is the
RLL-relation, which reads
\begin{equation}
R_{jk} (\lambda, \mu)
\left[L_{j l} (\lambda, \nu) \otimes L_{k l} (\mu, \nu)
\right]
=
\left[L_{k l} (\mu, \nu) \otimes L_{j l} (\lambda, \nu)\right]
R_{jk} (\lambda, \mu).
\end{equation}
Here $R_{jk} (\lambda, \mu)$ is the
$R$-matrix that acts non-trivially in the tensor product of two auxiliary spaces
$\Vcal_{j} \otimes \Vcal_{k}$ and has the following form
as a $4 \times 4$ matrix,
\begin{equation}
R_{jk} (\lambda, \mu)
=
\begin{pmatrix}
f (\lambda, \mu) & 0 & 0 & 0\\
0 & 1 & g (\lambda, \mu) & 0\\
0 & g (\lambda, \mu) & 1 & 0\\
0 & 0 & 0 & f (\lambda, \mu)
\end{pmatrix}_{[\Vcal_{j} \otimes \Vcal_{k}]}
.
\end{equation}
The functions $f (\lambda, \mu)$ and $g (\lambda, \mu)$
are given by
\begin{equation}
f (\lambda, \mu) = \frac{a(\lambda,\mu)}{b(\lambda,\mu)},
\qquad
g (\lambda, \mu) =
\frac{c}{b(\lambda,\mu)}.
\end{equation}

The RLL-relation leads to the same relation for the monodromy matrix, the
RTT-relation
\begin{equation}\label{RTT}
R_{jk} (\lambda, \mu)
\left[
T_j (\lambda) \otimes T_k (\mu)
\right]
=
\left[
T_k (\mu) \otimes T_j (\lambda)
\right]
R_{jk} (\lambda, \mu).
\end{equation}
This relation encodes commutation relations between the operators
$A (\lambda)$, $B (\lambda)$, $C (\lambda)$ and $D (\lambda)$, called
the Yang-Baxter algebra. For our purpose, we will use only some
of these commutation relations,
namely, the commutation relation of the $B$-operator with itself,
\begin{equation}\label{eq:BB}
\left[ B (\lambda), B (\mu) \right] = 0,
\end{equation}
and the following commutation relation between the $A$- and $B$-operators:
\begin{equation}\label{eq:AB}
A(\mu) B(\lambda) =
f(\lambda,\mu) B(\lambda) A(\mu) + g(\mu,\lambda) B(\mu) A(\lambda).
\end{equation}
In addition to these relations, we also use below
an explicit expression for the eigenvalue $a (\lambda)$
under the action of the operator $A(\lambda)$ on the vacuum vector.

\subsection{The quantum determinant}

The Yang--Baxter algebra \eqref{RTT} has a central element, called
the quantum determinant \cite{IK-81,K-82b}. It is defined as follows:
\begin{equation}\label{qdetdef}
\qDet T(\lambda)
:= D (\lambda) A (\lambda - 1) - C (\lambda) B (\lambda - 1).
\end{equation}
Another formula for $\qDet T(\lambda)$ is
\begin{equation}\label{qdet2}
\qDet T(\lambda)
= A (\lambda) D (\lambda - 1) - B (\lambda) C (\lambda - 1).
\end{equation}
The fact that the quantum determinant is a central element of the algebra means
that it commutes with all the entries of the monodromy matrix,
\begin{equation}
[\qDet T(\lambda), X(\mu)]=0,\qquad X\in \{A,B,C,D\}.
\end{equation}
The action of the quantum determinant on the quantum space is proportional
to the identity operator. The proportionality coefficient can be computed by acting 
with the expression \eqref{qdet2} on the state 
$\ket{\Uparrow}$ using \eqref{CUp0} and \eqref{eq:a-and-d-def}, that yields
\begin{equation}\label{qdet-ad}
\qDet T(\lambda)
= a (\lambda) d(\lambda-1).
\end{equation}
The properties of the quantum determinant follow from the commutation relations
encoded in \eqref{RTT}, for a detailed exposition see, e.g., \cite{KBI-93}, Ch.~VII.

\subsection{Eigenvectors of the $A$-operator}

Here we consider a construction of the eigenvectors of the operator $A(\lambda)$.

\begin{proposition}\label{prop:eigenA}
All the eigenvectors of the operator $A (\mu)$
are given by the following non-zero Bethe vectors
\begin{equation}\label{eq:A-eigenvector}
B ( \nu_{j_1} - 1) B ( \nu_{j_2} - 1 ) \cdots
B ( \nu_{j_k} - 1 ) \ket{\Uparrow},
\end{equation}
with the eigenvalues
\begin{equation}\label{eq:A-eigenvalue}
\Lambda (\mu ; \nu_{j_1}, \dots, \nu_{j_k}) =a(\mu)
\prod_{l = 1}^k f (\nu_{j_l} - 1, \mu),
\end{equation}
where $k \leq N$,
and $\nu_{j_l} \neq \nu_{j_m}$
for any $l, m = 1, \dots, k$,
$l \neq m$.
\end{proposition}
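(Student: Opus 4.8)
The plan is to show, first, that each vector \eqref{eq:A-eigenvector} is indeed an eigenvector with the claimed eigenvalue, and second, that these vectors span $\Hcal$ (so there are no others). For the first part, I would argue by induction on $k$ using the commutation relation \eqref{eq:AB}. The base case $k=0$ is \eqref{eq:a-and-d-def}. For the inductive step, apply $A(\mu)$ to $B(\nu_{j_1}-1)\cdots B(\nu_{j_k}-1)\kUa$ and push $A(\mu)$ to the right one factor at a time. Each application of \eqref{eq:AB} produces a ``wanted'' term with coefficient $f(\nu_{j_l}-1,\mu)$ and an ``unwanted'' term proportional to $g(\mu,\nu_{j_l}-1)\,B(\mu)$. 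The unwanted terms must cancel. The standard mechanism is the following: after fully commuting $A(\mu)$ through, the coefficient of the unwanted vector $B(\mu)B(\nu_{j_1}-1)\cdots\widehat{B(\nu_{j_l}-1)}\cdots B(\nu_{j_k}-1)\kUa$ (with the $l$th factor removed) is a sum over the order in which $A$ met $B(\nu_{j_l}-1)$, and this sum vanishes precisely because $A(\nu_{j_l}-1)\kUa = a(\nu_{j_l}-1)\kUa$ and $a(\nu_{j_l}-1)=\prod_{m}a(\nu_{j_l}-1,\nu_m)=0$: indeed $a(\nu_{j_l}-1,\nu_{j_l}) = (\nu_{j_l}-1)-\nu_{j_l}+1 = 0$ in the rational case (and likewise $\sin\gamma(\nu_{j_l}-1-\nu_{j_l}+1)=0$ in the trigonometric case). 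So the shift by $-1$ in the Bethe roots is exactly what kills the unwanted contributions. Using \eqref{eq:BB} to reorder the $B$'s freely, one collects all wanted terms into the product \eqref{eq:A-eigenvalue}.

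For the spanning statement, I would count dimensions within each magnon sector. The operator $A(\mu)$ preserves the number of down-spins, so it suffices to show that for each $k$ the vectors $B(\nu_{j_1}-1)\cdots B(\nu_{j_k}-1)\kUa$ with $\{j_1<\cdots<j_k\}\subseteq\{1,\dots,N\}$ are linearly independent; there are $\binom{N}{k}$ of them, which matches $\dim$ of the $k$-magnon subspace of $\Hcal=(\Cbb^2)^{\otimes N}$. Linear independence can be shown by computing a suitable coefficient: expand $B(\lambda)$ as a sum of $\sigma^-_k$ times known scalar functions of $\lambda$ and the $\nu$'s (reading off \eqref{eq:L-rat}, \eqref{Tmatrix}), and show that the ``leading'' component of $B(\nu_{j_1}-1)\cdots B(\nu_{j_k}-1)\kUa$ along the basis vector with down-spins exactly at positions $j_1,\dots,j_k$ is a nonzero multiple of a Vandermonde-type expression in the $\nu_{j_l}$, which is nonzero by the genericity hypothesis $\nu_{j_l}\neq\nu_{j_m}$. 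Alternatively, one orders the basis of magnon sectors and shows the change-of-basis matrix is triangular with nonzero diagonal.

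The main obstacle is the bookkeeping in the cancellation of unwanted terms in the inductive step: one has to keep track of which $B(\mu)$-term each partial commutation generates and verify that all contributions with a given removed root organize into a factor of $a(\nu_{j_l}-1)$, which then vanishes. This is the classical algebraic Bethe ansatz computation, and the only genuinely new input here is the observation that the spectral shift by $-1$ forces $a(\nu_{j_l}-1)=0$; I would state that vanishing explicitly as the crux. The spanning/linear-independence part is routine once the right coefficient is identified, and the genericity assumption on the $\nu_j$ is used there (and implicitly also to guarantee the eigenvalues \eqref{eq:A-eigenvalue} are finite and the vectors nonzero).
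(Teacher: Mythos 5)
Your proof of the eigenvector property is the same as the paper's: push $A(\mu)$ through the string of $B$'s via \eqref{eq:AB} and observe that every unwanted term carries a factor $a(\nu_{j_l}-1)=\prod_m a(\nu_{j_l}-1,\nu_m)=0$ because $a(\nu_{j_l}-1,\nu_{j_l})=0$; you correctly identify this vanishing as the crux. Where you genuinely diverge is in establishing that the vectors are non-zero and exhaust all eigenvectors. The paper handles non-vanishing with the quantum determinant: acting with $\qDet T(\nu_j)=D(\nu_j)A(\nu_j-1)-C(\nu_j)B(\nu_j-1)$ on $\kUa$ gives $C(\nu_j)B(\nu_j-1)\kUa=-a(\nu_j)\,d(\nu_j-1)\kUa\neq 0$, so $B(\nu_j-1)\kUa\neq 0$ with no coefficient bookkeeping; completeness then follows because the $2^N$ eigenvectors have pairwise distinct eigenvalues \eqref{eq:A-eigenvalue} (as functions of $\mu$), hence are linearly independent in a $2^N$-dimensional space. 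Your alternative --- extracting the component of $B(\nu_{j_1}-1)\cdots B(\nu_{j_k}-1)\kUa$ along the basis vector with down-spins at $j_1,\dots,j_k$ and exhibiting a triangular change of basis in each magnon sector --- also works (indeed $B(\nu_j-1)\kUa$ has no component with the flipped spin at a site below $j$, since the corresponding coefficient contains $a(\nu_j-1,\nu_j)=0$), and is more elementary in that it avoids the quantum determinant; the cost is that it needs explicit genericity of the $\nu$'s to keep the diagonal entries non-zero, whereas the paper's route isolates the needed genericity in the single condition $a(\nu_j)d(\nu_j-1)\neq0$. One small gap to close in your version: a basis of eigenvectors plus a dimension count does not by itself show that these are \emph{all} the eigenvectors --- if two of them shared an eigenvalue, their linear combinations would be eigenvectors not on the list --- so you should add the observation that the eigenvalues $\Lambda(\mu;\nu_{j_1},\dots,\nu_{j_k})$ are pairwise distinct for distinct subsets, which is exactly the remark with which the paper closes its proof.
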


\begin{proof}
Let us consider the Bethe vector $B (\nu_{j_1} - 1) \kUa$.
We act on the Bethe vector with $A (\mu)$
and use commutation relation \eqref{eq:AB}
to get
\begin{multline}
A(\mu) B(\nu_{j_1}-1) \kUa =
a(\mu) f(\nu_{j_1}-1, \mu) B(\nu_{j_1}-1) \kUa
\\
+ a(\nu_{j_1}-1) g(\mu,\nu_{j_1}-1)B(\mu) \kUa.
\end{multline}
The second term equals zero because
the eigenvalue of the operator $A (\mu)$, the function $a(\mu)$,
has zeros at the points $\mu = \nu_j - 1$, $j = 1, \dots, n$,
see \eqref{eq:a-and-d}, \eqref{eq:weights-rat}, \eqref{eq:weights-trig}.
Hence, the Bethe vector $B (\nu_{j_1} - 1) \kUa$ is an eigenvector
of the operator $A (\mu)$ with the eigenvalue
\begin{equation}
\Lambda (\mu ; \nu_{j_1} )
= a (\mu) f (\nu_{j_1} - 1, \mu),
\end{equation}
unless it is the null vector.

To show that the Bethe vector is non-zero,
we consider the quantum determinant \eqref{qdetdef}.
We act with $\qDet T(\nu_{j_1})$ on the vacuum-state
\begin{multline}\label{eq:q-det-action}
\qDet T (\nu_{j_1}) \kUa = D (\nu_j) a (\nu_{j_1} - 1) \kUa
- C (\nu_{j_1}) B (\nu_{j_1} - 1) \kUa
\\
= - C (\nu_{j_1}) B (\nu_{j_1} - 1) \kUa
\end{multline}
and use the fact that $\qDet T(\nu_{j_1})$ is proportional to the identity operator
multiplied by the $a(\nu_{j_1})d(\nu_{j_1}-1)$, see \eqref{qdet-ad}.
Therefore, the right-hand side of \eqref{eq:q-det-action}
is not equal to zero, hence the Bethe vector $B (\nu_{j_1} - 1) \kUa$
does not vanish as well.

The rest of the proof is by induction on the number of $B$-operators. It
is fairly straightforward and we omit the details.

Since $\nu_{j_l} \neq \nu_{j_m}$ for $l, m = 1, \dots, N$, and $l \neq m$,
we get all $2^N$ non-zero eigenvectors of the operator $A(\lambda)$
with different eigenvalues given by \eqref{eq:A-eigenvalue}.
\end{proof}

\subsection{Null Bethe vectors}

An important application of the explicit form for the
eigenstates of the operator $A(\mu)$ is the following
fact, playing a conceptual role in what follows.

\begin{lemma}\label{prop:null}
The vector $B (\nu_j) B (\nu_j - 1) \kUa$ equals zero for each $j=1,\dots,N$.
\end{lemma}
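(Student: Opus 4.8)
The plan is to apply the operators $A(\mu)$ to the vector $w:=B(\nu_j)\,B(\nu_j-1)\kUa$ and to deduce that, if $w\neq 0$, then $w$ would be a simultaneous eigenvector of the whole family $\{A(\mu)\}$ with an eigenvalue that does not occur in the list of Proposition~\ref{prop:eigenA} — a contradiction, forcing $w=0$. Since the components of $w$ depend polynomially (in the trigonometric case, as finite trigonometric polynomials) on $\nu_1,\dots,\nu_N$, it suffices to argue for generic values of these parameters.

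First I would note that $u:=B(\nu_j-1)\kUa$ is nonzero by Proposition~\ref{prop:eigenA} and satisfies $A(\mu)\,u=a(\mu)\,f(\nu_j-1,\mu)\,u$, and moreover $A(\nu_j)\,u=0$: indeed $f(\nu_j-1,\nu_j)=a(\nu_j-1,\nu_j)/b(\nu_j-1,\nu_j)$ with $a(\nu_j-1,\nu_j)=0$ by \eqref{eq:weights-rat} (resp.\ \eqref{eq:weights-trig}) while $b(\nu_j-1,\nu_j)\neq 0$ for generic parameters. Then, applying the commutation relation \eqref{eq:AB} with $\lambda=\nu_j$ to $u$ and using these two facts,
\begin{equation*}
A(\mu)\,w
= f(\nu_j,\mu)\,B(\nu_j)\,A(\mu)\,u + g(\mu,\nu_j)\,B(\mu)\,A(\nu_j)\,u
= a(\mu)\,f(\nu_j,\mu)\,f(\nu_j-1,\mu)\,w ,
\end{equation*}
so that either $w=0$ or $w$ is a common eigenvector of all $A(\mu)$ with eigenvalue $\Lambda(\mu):=a(\mu)\,f(\nu_j,\mu)\,f(\nu_j-1,\mu)$.

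Next I would invoke Proposition~\ref{prop:eigenA} again: every nonzero eigenvector of $A(\mu)$ is, up to a scalar, one of the $2^N$ Bethe vectors listed there, and distinct Bethe vectors have distinct eigenvalue functions. Hence a nonzero $w$ must be proportional to a single $B(\nu_{j_1}-1)\cdots B(\nu_{j_k}-1)\kUa$, which forces $\Lambda(\mu)=a(\mu)\prod_{l=1}^{k}f(\nu_{j_l}-1,\mu)$ identically in $\mu$. Cancelling $a(\mu)$ and using the identity $a(\nu_j-1,\mu)=b(\nu_j,\mu)$, immediate from the explicit weights, the left side $f(\nu_j,\mu)\,f(\nu_j-1,\mu)$ collapses to $a(\nu_j,\mu)/b(\nu_j-1,\mu)$; for generic parameters this has a genuine pole at $\mu=\nu_j-1$, whereas the right side $\prod_{l}b(\nu_{j_l},\mu)/b(\nu_{j_l}-1,\mu)$ has poles only at $\mu=\nu_{j_l}-1$. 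Matching poles forces $j\in\{j_1,\dots,j_k\}$; cancelling the common factor $f(\nu_j-1,\mu)$ then leaves $f(\nu_j,\mu)=\prod_{l\colon j_l\neq j}f(\nu_{j_l}-1,\mu)$, which is impossible for generic $\nu_1,\dots,\nu_N$ because the left side has a pole at $\mu=\nu_j$ absent from the right. Therefore $w=0$.

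The computations of the first two steps are routine applications of \eqref{eq:AB} and Proposition~\ref{prop:eigenA}; the one place that requires genuine care is the pole-matching at the end, where one uses that the $\nu_k$ are pairwise distinct and not related by shifts by $1$, so that $\Lambda(\mu)$ simply cannot be of the admissible form. The trigonometric case should go through verbatim, with $\sin\gamma(\,\cdot\,)$ in place of the linear factors and ``pole'' read in the trigonometric sense.
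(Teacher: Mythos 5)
Your proof follows the paper's argument exactly: act with $A(\mu)$ via \eqref{eq:AB}, note that the unwanted term vanishes because $A(\nu_j)B(\nu_j-1)\kUa=0$, and conclude that a nonzero $B(\nu_j)B(\nu_j-1)\kUa$ would be an eigenvector of $A(\mu)$ with eigenvalue $a(\mu)f(\nu_j,\mu)f(\nu_j-1,\mu)$, which is ruled out by the completeness statement of Proposition~\ref{prop:eigenA}. The only difference is that you spell out, via pole matching, why this eigenvalue cannot coincide with any of those in \eqref{eq:A-eigenvalue} --- a detail the paper merely asserts --- and your verification of that point is correct.
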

\begin{proof}
We proof the assertion by contradiction. We act with the operator $A (\mu)$
on the vector $B (\nu_j) B (\nu_j-1) \kUa$ and use commutation relation \eqref{eq:AB} 
to get
\begin{multline}
A (\mu) B (\nu_j) B (\nu_j - 1) \kUa
=
f (\nu_j, \mu)B (\nu_j) A (\mu) B (\nu_j - 1) \kUa
\\
-g (\nu_j, \mu)B (\mu) A (\nu_j) B (\nu_j - 1) \kUa.
\end{multline}
The second term vanishes because the eigenvalue of operator $A (\nu_j)$ is equal to zero,
see \eqref{eq:A-eigenvalue}.
Hence, the vector $B (\nu_j) B (\nu_j - 1) \kUa$ is either
an eigenvector of the operator $A (\mu)$ with the eigenvalue
\begin{equation}
\label{eq:B-eigenvalue}
\Lambda (\mu; \nu_j,\nu_j-1) =a(\mu) f(\nu_j,\mu) f(\nu_j-1,\mu)
\end{equation}
or the null vector. By Proposition \ref{prop:eigenA},
we have already constructed the complete set of eigenvectors 
of the operator $A (\mu)$, and none of eigenvalues \eqref{eq:A-eigenvalue}
is equal to $\Lambda (\mu; \nu_j,\nu_j-1)$. Therefore,
the vector $B (\nu_j) B (\nu_j - 1) \kUa$ is equal to zero.
\end{proof}

\subsection{Spin flip operators}

The operators $B(\nu_j)$, $j=1,\ldots,N$, play the role of operators flipping
spins for certain vectors. More precisely, we have the following simple
but useful
property.

\begin{lemma}\label{prop:spinflip}
The operator $B(\nu_j)$ flips the $j$th spin
in the vector $\ket{\uparrow_N\ldots\uparrow_{j+1}\uparrow_j\downarrow_{j-1}\ldots\downarrow_1}$, namely
\begin{equation}\label{spinflip}
B(\nu_j)\ket{\uparrow_N\ldots\uparrow_{j+1}\uparrow_j\downarrow_{j-1}\ldots\downarrow_1}=
a(\nu_j)\ket{\uparrow_N\ldots\uparrow_{j+1}\downarrow_j\downarrow_{j-1}\ldots\downarrow_1},
\end{equation}
where $a(\lambda)$ is the vacuum eigenvalue of the operator $A(\lambda)$, given by
\eqref{eq:a-and-d}.
\end{lemma}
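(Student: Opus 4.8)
The plan is to compute the left-hand side of \eqref{spinflip} directly from the definition of the monodromy matrix. Writing $B(\nu_j)$ as the $(\uparrow,\downarrow)$ entry of $T_j(\nu_j)$ in the auxiliary space $\mathcal V_j$ and expanding the product $T_j(\nu_j)=L_{jN}(\nu_j,\nu_N)\cdots L_{j1}(\nu_j,\nu_1)$ from \eqref{Tmatrix}, one gets
\begin{equation*}
B(\nu_j)=\sum_{\varepsilon_1,\dots,\varepsilon_{N-1}}
[L_{jN}]_{\uparrow\varepsilon_{N-1}}\,[L_{j,N-1}]_{\varepsilon_{N-1}\varepsilon_{N-2}}\cdots
[L_{j2}]_{\varepsilon_2\varepsilon_1}\,[L_{j1}]_{\varepsilon_1\downarrow},
\end{equation*}
where each matrix element $[L_{jk}(\nu_j,\nu_k)]_{\varepsilon\varepsilon'}$ is read off from \eqref{eq:L-rat} and acts on the quantum space $\mathcal H_k$ only. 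The idea is then to apply this to the state $\ket{v}:=\ket{\uparrow_N\dots\uparrow_j\downarrow_{j-1}\dots\downarrow_1}$ and to observe that the sum over the internal auxiliary indices $\varepsilon_1,\dots,\varepsilon_{N-1}$ collapses to a single surviving term.

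Concretely, I would track the auxiliary index from site $1$ up to site $N$, using $\sigma^{+}\ket{\downarrow}=\ket{\uparrow}$, $\sigma^{+}\ket{\uparrow}=0$, $\sigma^{-}\ket{\uparrow}=\ket{\downarrow}$, $\sigma^{-}\ket{\downarrow}=0$, and $\pi^{\pm}$ the projectors onto spin up, down. For $k<j$ the quantum spin in $\ket v$ is $\ket{\downarrow_k}$; the only off-diagonal entry that could raise the auxiliary index there is $c\,\sigma^{-}_k$, but $\sigma^{-}_k\ket{\downarrow_k}=0$, so the auxiliary index is forced to stay $\downarrow$ and each such site contributes the factor $a(\nu_j,\nu_k)$ coming from $[L_{jk}]_{\downarrow\downarrow}=b(\nu_j,\nu_k)\pi^{+}_k+a(\nu_j,\nu_k)\pi^{-}_k$, while leaving $\ket{\downarrow_k}$ unchanged. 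At site $k=j$ the spectral parameters coincide, so $b(\nu_j,\nu_j)=0$ whereas $a(\nu_j,\nu_j)=c$ in both \eqref{eq:weights-rat} and \eqref{eq:weights-trig}; since the quantum spin is $\ket{\uparrow_j}$, the diagonal entry $[L_{jj}]_{\downarrow\downarrow}$ kills the state because $b(\nu_j,\nu_j)\pi^{+}_j\ket{\uparrow_j}=0$, and the only survivor is $[L_{jj}]_{\uparrow\downarrow}=c\,\sigma^{-}_j$, which raises the auxiliary index to $\uparrow$, flips the $j$th quantum spin to $\ket{\downarrow_j}$, and contributes the factor $c$. Finally, for $k>j$ the quantum spin is $\ket{\uparrow_k}$, and since $\sigma^{+}_k\ket{\uparrow_k}=0$ the auxiliary index cannot return to $\downarrow$; it stays $\uparrow$, each such site contributing $a(\nu_j,\nu_k)$ from $[L_{jk}]_{\uparrow\uparrow}=a(\nu_j,\nu_k)\pi^{+}_k+b(\nu_j,\nu_k)\pi^{-}_k$. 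The auxiliary index thus ends at $\uparrow$, as required for the $(\uparrow,\downarrow)$ entry, and collecting the scalar factors gives
\begin{equation*}
B(\nu_j)\ket{v}
=c\prod_{\substack{k=1\\k\neq j}}^{N}a(\nu_j,\nu_k)\;
\ket{\uparrow_N\dots\uparrow_{j+1}\downarrow_j\downarrow_{j-1}\dots\downarrow_1};
\end{equation*}
using $c=a(\nu_j,\nu_j)$ and $a(\nu_j)=\prod_{k=1}^{N}a(\nu_j,\nu_k)$ from \eqref{eq:a-and-d} yields exactly \eqref{spinflip}.

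There is no genuine obstacle here: the entire content of the argument is the observation that, when $B(\nu_j)$ acts on the ``staircase'' state $\ket v$, the triangular spin pattern (blocking the transitions $\sigma^{-}_k\ket{\downarrow_k}=0$ for $k<j$ and $\sigma^{+}_k\ket{\uparrow_k}=0$ for $k>j$) together with the vanishing $b(\nu_j,\nu_j)=0$ at the coinciding parameters forces the auxiliary index to make its single jump precisely at column $j$. The only point requiring a small amount of care is to record that the identity $a(\nu_j,\nu_j)=c$ is valid in both the rational and the trigonometric normalizations, so that the prefactor bookkeeping reconstructs the full vacuum eigenvalue $a(\nu_j)$ rather than $a(\nu_j)/a(\nu_j,\nu_j)$.
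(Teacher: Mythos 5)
Your proposal is correct and is essentially the paper's own argument: the authors likewise write $B(\nu_j)$ as the $(1,2)$-entry of the product $L_{jN}\cdots L_{j1}$ acting on the staircase state, observe that the vanishing of $\sigma^{-}_k\ket{\downarrow_k}$, $\sigma^{+}_k\ket{\uparrow_k}$ and of $b(\nu_j,\nu_j)$ forces the single surviving path with the jump at column $j$, and use $a(\nu_j,\nu_j)=c$ to assemble the prefactor $a(\nu_j)$. The only difference is presentational (explicit $2\times2$ matrix products versus a sum over intermediate auxiliary indices).
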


\begin{proof}
We represent the $B$-operator as the $(1,2)$-entry of the monodromy matrix,
$B(\lambda)=[T(\lambda)]_{1,2}$. Acting with monodromy matrix $T(\nu_j)$
using its definition \eqref{Tmatrix}
on the vector $\ket{\uparrow_N\ldots\uparrow_{j+1}\uparrow_j\downarrow_{j-1}\ldots\downarrow_1}$
and picking up the relevant entry, yields
\begin{multline}
B(\nu_j)\ket{\uparrow_N\ldots\uparrow_{j+1}\uparrow_j\downarrow_{j-1}\ldots\downarrow_1}
\\
=
\bigg[\begin{pmatrix}
a(\nu_j,\nu_N)\ket{\uparrow_N} & c\ket{\downarrow_N}\\ 0 & b(\nu_j,\nu_N)\ket{\uparrow_N}
\end{pmatrix}\cdots
\begin{pmatrix}
a(\nu_j,\nu_{j+1})\ket{\uparrow_{j+1}} & c\ket{\downarrow_{j+1}}\\ 0 & b(\nu_j,\nu_{j+1})\ket{\uparrow_{j+1}}
\end{pmatrix}
\\ \times
\begin{pmatrix}
a(\nu_j,\nu_j)\ket{\uparrow_j} & c\ket{\downarrow_j}\\ 0 & 0
\end{pmatrix}
\begin{pmatrix}
b(\nu_j,\nu_{j-1})\ket{\downarrow_{j-1}} & 0 \\ c\ket{\uparrow_{j-1}} & a(\nu_j,\nu_{j-1})\ket{\downarrow_{j-1}}
\end{pmatrix}\cdots
\\ \times
\cdots
\begin{pmatrix}
b(\nu_j,\nu_{1})\ket{\downarrow_{1}} & 0\\ c\ket{\uparrow_{1}} & a(\nu_j,\nu_{1})\ket{\downarrow_{1}}
\end{pmatrix}\bigg]_{1,2}
\\
=c \prod_{\substack{k = 1\\ k \neq j}}^{N}a(\nu_j,\nu_k)
\ket{\uparrow_N\ldots\uparrow_{j+1}\downarrow_j\downarrow_{j-1}\ldots\downarrow_1}.
\end{multline}
Since $a(\nu_j,\nu_j)=c$ and taking into account \eqref{eq:a-and-d},
we get \eqref{spinflip}.
\end{proof}

As a corollary, the operator 
$B(\nu_j)\cdots B(\nu_1)$ applied to the vector $\ket{\Uparrow}$ flips $j$ spins,
\begin{equation}
B(\nu_j)\cdots B(\nu_1)\ket{\Uparrow}= \prod_{k=1}^{j} a(\nu_k)
\ket{\uparrow_N\dots\uparrow_{j+1}\downarrow_{j}\ldots\downarrow_{1}}.
\end{equation}
In the important case $j=N$,
\begin{equation}\label{BNUp}
B(\nu_N)\cdots B(\nu_1)\ket{\Uparrow}= \prod_{k=1}^{N} a(\nu_k)
\ket{\Downarrow},
\end{equation}
that is closely related to the partition function $Z_N$, see \eqref{eq:Z-mel}.

\section{Determinant representations in the rational case}
\label{sec:4}

In this section, we state the properties
of the partition function $Z_N(\{\lambda\})$
in the rational case. We prove that the partition function
can be written as a determinant involving an arbitrary basis of polynomials.
We also show that
representations \eqref{eq:Z-IK} and \eqref{eq:Z-K} arise
under specializations to the Lagrange and monomial bases, respectively.

\subsection{Properties of the partition function}

We consider the partition function $Z_N$ defined by \eqref{eq:Z-mel}. We will
study it as a function of the variables $\lambda_1, \dots, \lambda_N$,
considering the variables $\nu_1, \dots, \nu_N$ as parameters. As before, we
use the notation $\{\lambda\}\equiv \lambda_1, \dots, \lambda_N$. We begin
with describing several properties of the function $Z_N(\{\lambda\})$ and,
then we will show that a function with these properties is unique.

\begin{proposition}\label{prop:props-rat}
The partition function $Z_N (\{\lambda\})$ 
of the six-vertex model with DWBC in the rational parametrization,
as a function of the variables $\lambda_1, \dots, \lambda_N$
has the following properties:
\begin{enumerate}
\item
It is a symmetric polynomial in $\lambda_1, \dots, \lambda_N$;
\item
The degree in each of the variables $\lambda_1, \dots, \lambda_N$
equals $N - 1$;
\item
For each pair of
variables, say $\lambda_1,\lambda_2$,
one has
\begin{equation}\label{eqnunu}
Z_N(\nu_j, \nu_j - 1, \lambda_3, \dots, \lambda_N) = 0,
\qquad j = 1, \dots, N;
\end{equation}
\item
As $\{\lambda\}=\{\nu\}$, the following holds:
\begin{equation}\label{Znunu}
Z_N(\{\nu\})
= \prod_{j,k = 1}^N (\nu_j - \nu_k + 1).
\end{equation}
\end{enumerate}
\end{proposition}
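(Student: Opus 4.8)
The plan is to establish the four properties by working directly from the operator representation $Z_N=\bDa B(\lambda_N)\cdots B(\lambda_1)\kUa$ together with the Yang--Baxter commutation relations and the auxiliary lemmas proved above. The first property splits into two parts. Symmetry in $\lambda_1,\dots,\lambda_N$ follows at once from $[B(\lambda),B(\mu)]=0$, equation \eqref{eq:BB}, since the product of $B$-operators in \eqref{eq:Z-mel} is then independent of the order of the arguments. For polynomiality, I would argue that each matrix entry of the monodromy matrix $T(\lambda)$ is, by its definition \eqref{Tmatrix} as a product of $L$-operators whose entries are the rational weights \eqref{eq:weights-rat}, a polynomial in $\lambda$; in particular $B(\lambda)$ has polynomial entries in $\lambda$, so the scalar $Z_N(\{\lambda\})$ is a polynomial in each $\lambda_j$, and combined with symmetry it is a symmetric polynomial.

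For the degree statement (2), I would count degrees in a single variable, say $\lambda_1$. Each $L$-operator $L_{1k}(\lambda_1,\nu_k)$ has entries of degree at most $1$ in $\lambda_1$ (the $a$ and $b$ weights are linear, $c$ is constant), and the off-diagonal entries carrying $\sigma^\pm$ have degree $0$. In the product $T_1(\lambda_1)=L_{1N}\cdots L_{11}$, the $B$-entry is a sum of terms in which exactly one factor contributes a $\sigma^-$ (degree $0$) and the remaining $N-1$ factors contribute diagonal entries of degree $\leq 1$; hence $B(\lambda_1)$ has degree at most $N-1$ in $\lambda_1$. This gives the upper bound; for the lower bound, i.e.\ that the degree is exactly $N-1$, one can either extract the leading coefficient explicitly, or invoke the known Izergin--Korepin formula \eqref{eq:Z-IK} — but since the point of the paper is to avoid relying on that formula, I expect the cleaner route is to identify the top-degree term of $Z_N$ in $\lambda_1$ directly from the QISM expression, or to defer the exact-degree claim until after property (4), from which the leading behavior can be read off by a specialization/limit argument.

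Property (3) is where Lemma \ref{prop:null} does the work. Setting $\lambda_1=\nu_j$ and $\lambda_2=\nu_j-1$ and using $[B,B]=0$ to bring these two factors to the right, we get
\begin{equation*}
Z_N(\nu_j,\nu_j-1,\lambda_3,\dots,\lambda_N)=\bDa B(\lambda_N)\cdots B(\lambda_3)\,B(\nu_j)B(\nu_j-1)\kUa,
\end{equation*}
and by Lemma \ref{prop:null} the vector $B(\nu_j)B(\nu_j-1)\kUa$ vanishes, so the whole expression is zero. Note one must also handle the case where $j$ coincides with one of the remaining variables; since all steps use only $[B,B]=0$ and the annihilation of a fixed vector, no such subtlety actually arises.

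Property (4) is the computational heart and the step I expect to be the main obstacle. The idea is to evaluate $Z_N$ at $\lambda_i=\nu_i$ for all $i$. A direct approach uses the commutation relation \eqref{eq:AB} to expand $B(\nu_N)\cdots B(\nu_1)\kUa$: repeatedly applying $A$-operators and the eigenvalue data \eqref{eq:a-and-d}, \eqref{eq:A-eigenvalue}, together with the spin-flip Lemma \ref{prop:spinflip}, one should be able to show by induction on $N$ that only one term survives the pairing with $\bDa$. Alternatively — and I think more robustly — I would use Lemma \ref{prop:spinflip} and its corollary \eqref{BNUp}: first prove by induction on $N$ that $Z_N(\{\nu\})$ factorizes appropriately, peeling off $B(\nu_1)$ via the spin-flip action and recognizing the remaining matrix element as a partition function on an $(N-1)\times(N-1)$ sublattice with a shifted vacuum, picking up a factor $\prod_{k}a(\nu_1,\nu_k)=\prod_k(\nu_1-\nu_k+1)$ at each stage. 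The bookkeeping of which spins are up versus down at each inductive step, and verifying that the leftover boundary vector matches the hypothesis, is the delicate part; once the recursion $Z_N(\{\nu\})=\big(\prod_{k=1}^N(\nu_1-\nu_k+1)\big)\cdot(\text{analogue for }\nu_2,\dots,\nu_N)$ is secured, unwinding it yields exactly $\prod_{j,k=1}^N(\nu_j-\nu_k+1)$ as claimed in \eqref{Znunu}.
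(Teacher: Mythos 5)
Your proposal follows essentially the same route as the paper: symmetry and polynomiality from $[B,B]=0$ and the explicit form of the $L$-operator, property (3) from Lemma~\ref{prop:null}, and property (4) from the spin-flip Lemma~\ref{prop:spinflip}. The only thing you overcomplicate is property (4), which needs no fresh induction or bookkeeping: the corollary \eqref{BNUp} already gives $B(\nu_N)\cdots B(\nu_1)\kUa=\prod_{k=1}^{N}a(\nu_k)\ket{\Downarrow}$, so $Z_N(\{\nu\})=\prod_{k}a(\nu_k)=\prod_{j,k}(\nu_j-\nu_k+1)$ in one line; and your hesitation over the exact degree in (2) is harmless, since only the upper bound $\le N-1$ is actually used in the subsequent uniqueness argument.
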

\begin{proof}
The first and second properties follow from the explicit form of the
$L$-operator, see \eqref{eq:L-rat}, and the commutativity of the
$B$-operators. The third property is due to Lemma~\ref{prop:null}. The fourth
property follows from relations \eqref{eq:Z-mel} and
\eqref{BNUp}.
\end{proof}

As a remark,
notice that one can prove property \eqref{Znunu}
without Lemma~\ref{prop:spinflip}
using
Korepin's
reduction relation \cite{K-82}.
The
starting point is the fact
that
$L$-operator \eqref{eq:L-rat}
with coinciding arguments
is just the permutation operator,
$L_{jk}(\nu,\nu)=P_{jk}$.
Consider
the vertex at the top right corner of the lattice, see Fig.~\ref{fig:DWBC}.
The DWBC mean that at $\lambda_1=\nu_1$
only the arrow configuration of type 6 can contribute,
see Fig.~\ref{fig:six-vertices}. This observation immediately
yields
the relation
\begin{multline}\label{Zreduction}
Z_N(\nu_1,\lambda_2,\ldots,\lambda_N)
=\prod_{k=2}^{N}(\nu_1-\nu_k+1)
\prod_{j=2}^{N}(\lambda_j-\nu_1+1)
\\ \times
Z_{N-1}(\lambda_2,\ldots,\lambda_N;\{\nu\}\setminus \nu_1),
\end{multline}
where
it is indicated
in the notation
that the partition function
$Z_{N-1}$ depends on the set of parameters
$\{\nu\}\setminus \nu_1\equiv\nu_2,\ldots,\nu_N$.
Repeating relation \eqref{Zreduction} iteratively
for the remaining variables
by setting
subsequently
$\lambda_2=\nu_2$, $\lambda_3=\nu_3$, $\ldots$, $\lambda_N=\nu_N$,
one
obtains
\eqref{Znunu}.

Let us now
study
the system of equations \eqref{eqnunu}, \eqref{Znunu}
and show that it has a unique solution.

\begin{proposition}\label{theorem:dim-rat}
Let
$P_N (\{\lambda\})$ be a
polynomial in
the variables
$\lambda_1, \dots, \lambda_N$,
depending on the
parameters $\{\nu\}=\nu_1,\ldots,\nu_N$,
that
has the following properties:
\begin{enumerate}
\item
It is symmetric in $\lambda_1, \dots, \lambda_N$;
\item
The degree in each of the variables $\lambda_1, \dots, \lambda_N$
equals $N - 1$;
\item
For each pair of
variables, say $\lambda_1,\lambda_2$,
one has
\begin{equation}\label{PNeq}
P_N(\nu_j, \nu_j - 1,\lambda_3, \dots, \lambda_N) =0,\qquad
j = 1, \dots, N;
\end{equation}
\item
As $\{\lambda\}=\{\nu\}$, the following holds:
\begin{equation}\label{Pnunu}
P_N(\{\nu\}) = \prod_{j,k = 1}^N (\nu_j - \nu_k + 1).
\end{equation}
\end{enumerate}
Then, $P_N (\{\lambda\})
=Z_N (\{\lambda\})$.
\end{proposition}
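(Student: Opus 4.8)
The plan is to show that the properties (1)--(4) pin down $P_N(\{\lambda\})$ completely, by counting: a symmetric polynomial of degree $N-1$ in each $\lambda_j$ lives in a finite-dimensional space, and the vanishing conditions \eqref{PNeq} cut this space down to dimension one, after which the normalization \eqref{Pnunu} fixes the remaining scalar. Since $Z_N(\{\lambda\})$ satisfies all four properties by Proposition~\ref{prop:props-rat}, it must coincide with $P_N(\{\lambda\})$.

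First I would reduce from the symmetric multivariate setting to a one-variable statement. Fix $\lambda_2, \dots, \lambda_N$ generic and regard $P_N$ as a polynomial in $\lambda_1$ of degree $\leq N-1$; then by symmetry the same holds in each variable separately. Following the idea of \cite{NPT-97}, I would instead argue at the level of the full space: let $\mathcal{P}$ be the space of polynomials in $\{\lambda\}$ that are symmetric and of degree $\leq N-1$ in each variable. A natural basis is given by symmetrized monomials, and one checks $\dim\mathcal{P}$ is manageable; more to the point, restricting such a $P$ to the ``diagonal-type'' specializations is what matters. The key linear-algebra step is to show that the conditions \eqref{PNeq} — which say that for each $j$ the polynomial vanishes whenever some pair of arguments equals $(\nu_j, \nu_j-1)$ — force $P_N$, up to an overall constant, to equal a fixed polynomial; I expect this fixed polynomial to be expressible via a determinant (indeed this is exactly what Theorem~\ref{theorem:det-rat} will provide), but for the uniqueness argument one only needs that the solution space of the homogeneous system (properties (1)--(3) with the right-hand side of (4) replaced by $0$) is zero-dimensional.

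To see that, I would proceed by induction on $N$. Suppose $P_N$ satisfies (1)--(3) and vanishes under $\{\lambda\}=\{\nu\}$. Specialize $\lambda_1 = \nu_1$: by property (3) applied with the pair $(\lambda_1,\lambda_2)$ and index $j$ such that $\nu_j - 1$ appears, together with symmetry, one extracts that $P_N(\nu_1, \lambda_2, \dots, \lambda_N)$ is divisible by $\prod_{j=2}^N(\lambda_j - \nu_1 + 1)$ (each $\lambda_j$ can be sent to $\nu_1 - 1$ while $\lambda_1 = \nu_1$, triggering a zero) — but each such factor is degree one in $\lambda_j$ and there are $N-1$ of them in $N-1$ variables each of degree $\leq N-1$, so this does not immediately collapse things; instead I'd track degrees carefully. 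After dividing out the forced factors, the quotient is a symmetric polynomial in $\lambda_2,\dots,\lambda_N$ of degree $\leq N-2$ in each variable and satisfying the analogue of (3) with $\{\nu\}\setminus\nu_1$, hence by the inductive hypothesis it is a constant multiple of $Z_{N-1}$; feeding this back through the reduction relation \eqref{Zreduction} and the normalization \eqref{Pnunu} fixes the constant to be the one for $Z_N$. The base case $N=1$ is trivial: $P_1$ is a constant and \eqref{Pnunu} gives $P_1 = \nu_1 - \nu_1 + 1 = 1 = Z_1$.

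The main obstacle is the degree bookkeeping in the induction step: one must verify that specializing $\lambda_1 = \nu_1$ really does force the full product $\prod_{j=2}^N(\lambda_j - \nu_1 + 1)$ out of $P_N$ (not merely some of the factors), and that the quotient genuinely has degree dropped to $\leq N-2$ in each remaining variable so that the inductive hypothesis applies. This is where property (3) must be used in its full strength — for every index $j$, and for the pair involving $\lambda_1$ — and where one needs that the $\nu_k$ are generic (distinct, with generic differences) so that the various linear factors are coprime and the vanishing loci are independent. Once this divisibility-plus-degree-drop is established cleanly, the rest is routine: the reduction relation \eqref{Zreduction} transports the normalization through the induction, and symmetry upgrades the $\lambda_1 = \nu_1$ analysis to the statement for all variables.
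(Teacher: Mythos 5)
There is a genuine gap at the heart of your induction step: you determine $P_N$ only on the single hyperplane $\lambda_1=\nu_1$ and then assert that ``symmetry upgrades the $\lambda_1=\nu_1$ analysis to the statement for all variables.'' Symmetry only transfers this information to the hyperplanes $\lambda_k=\nu_1$, $k=1,\dots,N$; it does \emph{not} give you the values of $P_N$ at $\lambda_1=\nu_2,\dots,\nu_N$. Since $P_N$ has degree $N-1$ in $\lambda_1$, its restriction to $\lambda_1=\nu_1$ (even combined with full symmetry and the degree bound) leaves a large ambiguity: any polynomial of the form $\prod_{k=1}^{N}(\lambda_k-\nu_1)\cdot s(\lambda_1,\dots,\lambda_N)$, with $s$ symmetric of degree at most $N-2$ in each variable, is symmetric, has degree at most $N-1$ in each variable, and is invisible to the data you have extracted. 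The missing step --- which is the crux of the paper's proof --- is to run the $\lambda_1=\nu_i$ analysis for \emph{every} $i=1,\dots,N$ (this is where the instances of property (3) with $j\ne 1$ actually get used) and then reconstruct $P_N$ by Lagrange interpolation in $\lambda_1$ at the $N$ points $\nu_1,\dots,\nu_N$, as in \eqref{PNL1}. Each value $P_N(\nu_i,\lambda_2,\dots,\lambda_N)$ is pinned down exactly as you describe for $i=1$ (divisibility by $\prod_{k\ge 2}(\lambda_k-\nu_i+1)$ forced by property (3) and symmetry, degree drop to $N-2$, induction with parameters $\{\nu\}\setminus\nu_i$), and the interpolation formula assembles these $N$ pieces into a unique $P_N$.

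Two smaller points. First, invoking the inductive hypothesis to conclude that the quotient is ``a constant multiple of $Z_{N-1}$'' is not literally what the hypothesis gives: the hypothesis is uniqueness subject to the normalization (4), so you should either verify the normalization of the quotient directly (the paper does this by evaluating at $\{\lambda\}=\{\nu\}$ and using the symmetry of $P_N$, which is cleaner than routing through \eqref{Zreduction}), or first upgrade uniqueness to one-dimensionality of the solution space of (1)--(3) by a short linear-algebra argument. Second, your worry about the degree bookkeeping is unfounded: the factors $(\lambda_k-\nu_i+1)$ for different $k$ involve distinct variables, so divisibility by each of them separately implies divisibility by their product, and the degree in each $\lambda_k$ drops by exactly one. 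With the interpolation step added and the normalization handled as above, your argument becomes essentially the paper's proof.
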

\begin{proof}
The proof is by induction on $N$. The base of induction at $N=1$ is clear.

For the induction step, expand the polynomial $P_N (\{\lambda\})$ in
the variable $\lambda_1$ using the Lagrange interpolating polynomials
of degree $N-1$ for the points $\nu_1,\ldots,\nu_N$,
\begin{equation}\label{PNL1}
P_N (\{\lambda\})
= \sum_{i = 1}^N
P_N (\nu_i, \lambda_2, \dots, \lambda_N)
\prod_{\substack{j = 1\\ j \neq i}}^{N}
\frac{ \lambda_1 - \nu_j }{\nu_i - \nu_j}.
\end{equation}
It follows from \eqref{PNeq} and the symmetry of the variables $\lambda_1,\dots,\lambda_N$ that
\begin{equation}\label{Pnui}
P_N (\nu_i, \lambda_2, \dots, \lambda_N)
=
S_{N-1,i} (\lambda_2, \dots, \lambda_N)
\prod_{k = 2}^N (\lambda_k - \nu_i + 1)\,\prod_{j = 1}^N (\nu_i - \nu_j + 1),
\end{equation}
$i=1,\ldots,N$, for some polynomials
$S_{N-1,i} (\lambda_2, \dots, \lambda_N)$ in $\lambda_2, \dots, \lambda_N$.
The second product in \eqref{Pnui} provides a convenient normalization
of the polynomials $S_{N-1,i}$.

It is straightforward to verify from the properties (1)\,--\,(4)
of $P_N (\{\lambda\})$ that for each $i=1,\ldots,N$, the polynomial
$S_{N-1,i} (\lambda_2, \dots, \lambda_N)$ has the following properties:
\begin{enumerate}
\item
It is symmetric in $\lambda_2, \dots, \lambda_N$;
\item
The degree in each of the variables $\lambda_2, \dots, \lambda_N$
equals $N - 2$;
\item
For each pair of the variables, say $\lambda_2,\lambda_3$,
one has
\begin{equation}\label{SNeq}
S_{N-1,i}(\nu_j, \nu_j - 1,\lambda_4, \dots, \lambda_N) =0,\qquad
j = 1, \dots, i-1, i+1,\ldots, N;
\end{equation}
\item
The following holds:
\begin{equation}\label{Snunu}
S_{N-1,i}(\{\nu\}\setminus \nu_i) =
\prod_{\substack{j,k = 1\\j,k\ne i}}^N (\nu_j - \nu_k + 1),
\end{equation}
where we use the notation
$\,\{\nu\}\setminus \nu_i:=\nu_1,\ldots,\nu_{i-1},\nu_{i+1},\ldots,\nu_N$.
\end{enumerate}
Hence by the induction assumption, for each $i=1,\ldots,N$, one has
\begin{equation}\label{SZX}
S_{N-1,i} (\lambda_2, \dots, \lambda_N)
= Z_{N-1} (\lambda_2, \dots, \lambda_N;\{\nu\}\setminus \nu_i),
\end{equation}
indicating in the right-hand side the dependence on the parameters $\{\nu\}$ explicitly.

Formulas \eqref{PNL1}, \eqref{Pnui}, \eqref{SZX} show that $P_N(\{\lambda\})$ is uniquely
determined by its properties (1)\,--\,(4). Therefore, $P_N(\{\lambda\})=Z_N (\{\lambda\})$
by Proposition \ref{prop:props-rat}.
\end{proof}

\subsection{Determinant representation}\label{sec42}
Consider polynomials $p_1 (\lambda), \dots, p_N (\lambda)$ in one variable
of degree at most $N-1$,
\begin{equation}
p_i(\lambda)=\sum_{j=1}^N p_{ij}\lambda^{j-1},\qquad i=1,\ldots,N.
\end{equation}
Set
\begin{equation}
Q_N(\{p\})=\det[p_{ij}]_{i,j=1,\ldots,N},
\end{equation}
where $\{p\}=(p_1,\ldots,p_N)$.
The polynomials $p_1(\lambda),\dots,p_N(\lambda)$ form a basis of polynomials
in $\lambda$ of degree $N-1$ if and only if $Q_N(\{p\})\ne 0$. Notice the equality
\begin{equation}\label{detp=QV}
\det\left[p_k(\lambda_j)\right]_{j,k=1,\ldots,N}=
Q_N(\{p\})\prod_{1\leq j<k\leq N}(\lambda_k-\lambda_j),
\end{equation}
that follows from reading the formula
\begin{equation}
p_k(\lambda_j)=\sum_{i=1}^N p_{ki}\lambda_j^{i-1}
\end{equation}
as a product of matrices
\begin{equation}
\left[p_k(\lambda_j)\right]_{j,k=1,\ldots, N}=
[p_{ki}]_{i,k=1,\ldots, N}\;[\lambda_j^{i-1}]_{i,j=1,\ldots, N},
\end{equation}
and the Vandermonde determinant formula.

Our main result in the rational case can be formulated as follows.

\begin{theorem}\label{theorem:det-rat}
The partition function of the six-vertex model with DWBC
with the rational weights \eqref{eq:weights-rat} can be written in the form
\begin{equation}\label{eq:det-rat}
Z_N(\{\lambda\})=\frac{
\det
\left[
p_k (\lambda_j) a (\lambda_j) -
p_k (\lambda_j + 1) d (\lambda_j)
\right]_{j, k=1,\ldots, N}}
{\det\left[p_k(\lambda_j)\right]_{j,k=1,\ldots,N}}
\end{equation}
where $p_1 (\lambda), \dots, p_N (\lambda)$ form a basis of polynomials in $\lambda$
of degree $N-1$, and $a (\lambda)$ and $d (\lambda)$ are the eigenvalues
of the operators $A (\lambda)$ and $D (\lambda)$:
\begin{equation}\label{ad}
a(\lambda)=\prod_{k=1}^{N}(\lambda-\nu_k+1),\qquad
d(\lambda)=\prod_{k=1}^{N}(\lambda-\nu_k).
\end{equation}
Notice formula \eqref{detp=QV} for the denominator in \eqref{eq:det-rat}.
\end{theorem}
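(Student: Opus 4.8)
The plan is to verify that the right-hand side of \eqref{eq:det-rat} satisfies the four defining properties listed in Proposition \ref{theorem:dim-rat}; uniqueness then forces it to equal $Z_N(\{\lambda\})$. Write $R_N(\{\lambda\})$ for the right-hand side of \eqref{eq:det-rat}. Using \eqref{detp=QV}, the denominator equals $Q_N(\{p\})\prod_{1\le j<k\le N}(\lambda_k-\lambda_j)$, so $R_N$ is a ratio of an antisymmetric function by the Vandermonde, hence symmetric in $\lambda_1,\dots,\lambda_N$ once we check it is polynomial. The numerator is a determinant whose $(j,k)$ entry is the polynomial $g_k(\lambda_j)$, where $g_k(\lambda):=p_k(\lambda)a(\lambda)-p_k(\lambda+1)d(\lambda)$; since $a$ and $d$ have degree $N$ and $p_k$ degree $\le N-1$, each $g_k$ has degree $\le 2N-1$, but the leading terms $\lambda^{N-1}\cdot\lambda^N$ of $p_k(\lambda)a(\lambda)$ and $p_k(\lambda+1)d(\lambda)$ cancel (both have the same leading coefficient $p_{kN}$), so in fact $\deg g_k\le 2N-2$. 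This is the key degree bookkeeping step.

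For property (1) and the polynomiality hidden in it, I would argue as follows. The numerator determinant is antisymmetric and so divisible by $\prod_{j<k}(\lambda_k-\lambda_j)$ in the ring of polynomials; the quotient is therefore a genuine polynomial, and dividing by the nonzero constant $Q_N(\{p\})$ keeps it so. Hence $R_N$ is a symmetric polynomial. For property (2), the degree in $\lambda_1$: expanding the numerator determinant along its first row, each entry $g_k(\lambda_1)$ has degree $\le 2N-2$, and dividing by $\prod_{k=2}^N(\lambda_k-\lambda_1)$ (the part of the Vandermonde involving $\lambda_1$) drops the degree by $N-1$, giving degree $\le N-1$ in $\lambda_1$; by symmetry the same bound holds for every $\lambda_j$. (One should also check the bound is not accidentally lower, but for the uniqueness argument degree $\le N-1$ together with the other properties is what is really needed — or one notes the top coefficient is nonzero by a direct computation.)

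For property (3), set $\lambda_1=\nu_m$ and $\lambda_2=\nu_m-1$ for a fixed $m$. Then row $1$ of the numerator has entries $g_k(\nu_m)=p_k(\nu_m)a(\nu_m)-p_k(\nu_m+1)d(\nu_m)$; but $a(\nu_m)=\prod_k(\nu_m-\nu_k+1)$ contains the factor $(\nu_m-\nu_m+1)=1$... more usefully, $d(\nu_m)=\prod_k(\nu_m-\nu_k)=0$, so $g_k(\nu_m)=p_k(\nu_m)a(\nu_m)$. Meanwhile row $2$ has entries $g_k(\nu_m-1)=p_k(\nu_m-1)a(\nu_m-1)-p_k(\nu_m)d(\nu_m-1)$, and $a(\nu_m-1)=\prod_k(\nu_m-1-\nu_k+1)=\prod_k(\nu_m-\nu_k)=0$, so $g_k(\nu_m-1)=-p_k(\nu_m)d(\nu_m-1)$. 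Thus rows $1$ and $2$ are $a(\nu_m)\,(p_k(\nu_m))_k$ and $-d(\nu_m-1)\,(p_k(\nu_m))_k$ respectively — proportional vectors — so the determinant vanishes. This is the conceptual heart of the argument and the place where the specific combination $p_k(\lambda)a(\lambda)-p_k(\lambda+1)d(\lambda)$ is designed to work; it mirrors the quantum-determinant mechanism of Lemma \ref{prop:null}. Note the Vandermonde denominator also vanishes when $\lambda_1=\lambda_2$, but here $\nu_m\ne\nu_m-1$, so no indeterminacy arises.

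For property (4) I would specialize $\{\lambda\}=\{\nu\}$, i.e. $\lambda_j=\nu_j$ for all $j$. Then the numerator entry in position $(j,k)$ is $p_k(\nu_j)a(\nu_j)-p_k(\nu_j+1)d(\nu_j)$, and since $d(\nu_j)=\prod_l(\nu_j-\nu_l)=0$ this collapses to $a(\nu_j)\,p_k(\nu_j)$. Factoring $a(\nu_j)$ out of row $j$ gives numerator $=\big(\prod_{j=1}^N a(\nu_j)\big)\det[p_k(\nu_j)]_{j,k}$, which by \eqref{detp=QV} equals $\big(\prod_{j=1}^N a(\nu_j)\big)\,Q_N(\{p\})\prod_{j<k}(\nu_k-\nu_j)$. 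Dividing by the denominator $Q_N(\{p\})\prod_{j<k}(\nu_k-\nu_j)$ — here care is needed since this is a limit $\lambda_j\to\nu_j$ of a continuous function, but $R_N$ is a polynomial so the value is just the specialization — leaves $\prod_{j=1}^N a(\nu_j)=\prod_{j,k}(\nu_j-\nu_k+1)$, which is exactly \eqref{Znunu}. Since $R_N$ satisfies properties (1)--(4), Proposition \ref{theorem:dim-rat} gives $R_N(\{\lambda\})=Z_N(\{\lambda\})$, completing the proof. I expect the main obstacle to be the clean handling of the polynomiality/degree claims (properties 1 and 2): one must be careful that dividing the antisymmetric numerator by the Vandermonde yields a polynomial of the asserted degree and not something larger, and that the answer is genuinely independent of the choice of basis $\{p\}$ — the latter being transparent once one observes the determinant ratio is unchanged under $GL_N$ change of basis because numerator and denominator transform by the same determinantal factor.
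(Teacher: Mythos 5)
Your proposal is correct and follows essentially the same route as the paper: verify that the determinant ratio satisfies properties (1)--(4) of Proposition \ref{theorem:dim-rat} (symmetry and polynomiality via antisymmetry of the numerator, the degree bound from the cancellation of leading terms of $p_k(\lambda)a(\lambda)$ and $p_k(\lambda+1)d(\lambda)$, the vanishing at $\lambda_1=\nu_j$, $\lambda_2=\nu_j-1$ from proportional rows using $a(\nu_j-1)=d(\nu_j)=0$, and the evaluation at $\{\lambda\}=\{\nu\}$), then invoke uniqueness. The paper likewise only establishes the degree as an upper bound $\le N-1$, so your remark that this suffices for the interpolation-based uniqueness argument is consistent with the intended reading.
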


\begin{proof}
First observe that the denominator in \eqref{eq:det-rat} is a nonzero polynomial
since $p_1 (\lambda), \dots, p_N (\lambda)$ form a basis of polynomials in $\lambda$
of degree $N-1$, see formula \eqref{detp=QV}.

Denote by $P_N(\{\lambda\})$ the right-hand side of formula \eqref{eq:det-rat}.
The numerator in \eqref{eq:det-rat} is an antisymmetric polynomial in $\lambda_1,\allowbreak\ldots,\allowbreak\lambda_N$,
and hence, is divisible by the denominator. Therefore,
$P_N(\{\lambda\})$ is a symmetric polynomial in $\lambda_1,\allowbreak\ldots,\allowbreak\lambda_N$.
To prove the theorem, we will show that $P_N(\{\lambda\})$ obeys also properties (2)\,--\,(4)
described in Proposition \ref{theorem:dim-rat}. Then $P_N(\{\lambda\})=Z_N(\{\lambda\})$
by Proposition \ref{theorem:dim-rat}.

For the second property, observe that both $a(\lambda)$ and $d(\lambda)$ are monic polynomials
of degree $N$. Thus for each $k=1,\ldots,N$, the polynomial
$p_k (\lambda) a (\lambda) - p_k (\lambda + 1) d (\lambda)$ has degree at most $2N-2$.
Therefore, the degree of $P_N(\{\lambda\})$ in each of the variables $\lambda_1,\ldots,\lambda_N$
does not exceed $N-1$.

The third property follows from the equalities
\begin{equation}\label{adzero}
a (\nu_j-1)=d(\nu_j)=0, \qquad j = 1, \dots, N,
\end{equation}
so that after the substitution $\lambda_1=\nu_j$, $\lambda_2=\nu_j-1$,
the first two rows of the matrix
$\left[p_k (\lambda_j)a(\lambda_j)-p_k (\lambda_j+1)d(\lambda_j)\right]_{j,k=1,\ldots,N}$
become proportional,
\begin{equation}
\begin{pmatrix}
p_1 (\nu_j) a (\nu_j) & p_2 (\nu_j) a (\nu_j) & \dots & p_N (\nu_j) a (\nu_j)
\\
p_1 (\nu_j) d (\nu_j-1) & p_2 (\nu_j) d (\nu_j-1) & \dots & p_N (\nu_j) d (\nu_j-1)
\\
\hdotsfor{4}
\\
\hdotsfor{4}
\end{pmatrix},
\end{equation}
and its determinant vanishes. Hence,
$P_N(\nu_j, \nu_j - 1,\lambda_3, \dots,\allowbreak \lambda_N) =0$.

To verify the fourth property, we take $\{\lambda\}=\{\nu\}$.
Taking into account the second equality in \eqref{adzero}, we get
\begin{multline}
\det\left[p_k (\nu_j) a (\nu_j) -p_k (\nu_j + 1) d (\nu_j)\right]_{j, k=1,\ldots, N}
\\ 
=\det\left[p_k (\nu_j) a (\nu_j)\right]_{j, k=1,\ldots, N}
=\left(\prod_{j=1}^{N} a(\nu_j)\right)\det\left[p_k (\nu_j) \right]_{j, k=1,\ldots, N}.
\end{multline}
Therefore by the first of formulas \eqref{ad},
\begin{equation}
P_N(\{\nu\})=\prod_{j,k = 1}^N (\nu_j - \nu_k + 1),
\end{equation}
that completes the proof of Theorem \ref{theorem:det-rat}.
\end{proof}

Let us now turn to examples. The first example
to consider is
the celebrated determinant formula \eqref{eq:Z-IK}.
To
get it, we choose $p_1(\lambda),\ldots\,p_N(\lambda)$ to be the
interpolating polynomials of degree $N-1$ for the points $\nu_1,\ldots,\nu_N$,
\begin{equation}
p_k (\lambda)
= \prod_{\substack{l = 1\\l \neq k}}^N (\lambda - \nu_l),\qquad k=1,\ldots,N.
\end{equation}
Since $p_k(\lambda)=d(\lambda)/(\lambda-\nu_k)$ and $p_k(\lambda+1)=a(\lambda)/(\lambda-\nu_k+1)$,
we have
\begin{multline}
p_k (\lambda_j) a (\lambda_j) - p_k (\lambda_j + 1) d (\lambda_j)
\\ 
=\frac{d(\lambda_j)a (\lambda_j)}{\lambda_j-\nu_k}-
\frac{a(\lambda_j) d (\lambda_j)}{\lambda_j-\nu_k+1}
=\frac{a(\lambda_j) d (\lambda_j)}{(\lambda_j-\nu_k)(\lambda_j-\nu_k+1)},
\end{multline}
and for the numerator in formula \eqref{eq:det-rat}, we obtain
\begin{multline}\label{IKnum}
\det
\left[p_k (\nu_j) a (\nu_j) -p_k (\nu_j + 1) d (\nu_j)\right]_{j, k=1,\ldots, N}
\\
=\left(\prod_{j=1}^N a(\lambda_j)d(\lambda_j) \right)
\det
\left[\frac{1}{(\lambda_j-\nu_k)(\lambda_j-\nu_k+1)}\right]_{j, k=1,\ldots, N}.
\end{multline}
For the denominator in formula \eqref{eq:det-rat}, we use first formula \eqref{detp=QV}
and evaluate the factor $Q_N(\{p\})$ employing the same formula with $\{\lambda\}=\{\nu\}$,
\begin{multline}\label{IKden}
Q_N(\{p\})=\frac{\det\left[p_k(\nu_j)\right]_{j,k=1,\ldots,N}}{\prod_{1\leq j<k\leq N}(\nu_k-\nu_j)}
=\frac{\det\left[\delta_{kj}\prod_{i=1,i\ne k}^N(\nu_k-\nu_i)\right]_{j,k=1,\ldots,N}}
{\prod_{1\leq j<k\leq N}(\nu_k-\nu_j)}
\\
=\prod_{1\leq j<k\leq N}(\nu_j-\nu_k).
\end{multline}
Taking together formulas \eqref{Vander-rat}, \eqref{eq:det-rat}, \eqref{IKnum}, and \eqref{IKden},
we get formula \eqref{eq:Z-IK} in the rational case.

The second example we would like to mention is the choice of $p_1(\lambda),\ldots\,p_N(\lambda)$
as monomials,
\begin{equation}
p_k (\lambda) = \lambda^{k - 1},\qquad k=1,\ldots,N,
\end{equation}
In this case, representation \eqref{eq:det-rat} clearly reproduces Kostov's
determinant formula \eqref{eq:Z-K}.

\section{Determinant representation in the trigonometric case}
\label{sec:5}

In this section, we
generalize the
results of the previous section
to the trigonometric case.
As before,
our main result is
the determinant representation
for the partition function
that depends on an arbitrary basis of polynomials.
Particular
examples are
given by
representations \eqref{eq:Z-IK} and \eqref{eq:Z-K-trig}.

\subsection{Properties of the partition function}

As we have
observed
studying the rational case, an important role
is played by a polynomial dependence of the partition function on the inhomogeneity parameters.
The polynomial dependence in the trigonometric case can be introduced by
the ``change of variables''
\begin{equation}
x_j=q^{2\lambda_j},\qquad
y_k=q^{2\nu_k},\qquad
q=\rme^{\rmi\gamma},
\end{equation}
so the Boltzmann weights \eqref{eq:weights-trig} in the trigonometric case are
\begin{equation}\label{abc-xyq}
a(\lambda_j,\nu_k)=\frac{q x_j-y_k q^{-1}}{2\rmi\,(x_jy_k)^{1/2}},\quad
b(\lambda_j,\nu_k)=\frac{x_j-y_k}{2\rmi\,(x_jy_k)^{1/2}},\quad
c(\lambda_j,\nu_k)=\frac{q-q^{-1}}{2\rmi}.
\end{equation}
Then the partition function of the six-vertex model with DWBC
in the trigonometric case has the form
\begin{equation}\label{ZwtZ}
Z_N (\{\lambda\}; \{\nu\})=
\frac{\wt Z_N (\{x\}; \{y\})}{(2\rmi)^{N^2}\prod_{j = 1}^N(x_jy_j)^{(N-1)/2}},
\end{equation}
where $\wt Z_N(\{x\}; \{y\})$ is a polynomial in $\{x\}\equiv x_1,\ldots,x_N$ and
$\{y\}\equiv y_1,\ldots,y_N$.

Indeed, for every allowed configuration of the six-vertex with DWBC, each horizontal and
each vertical line contains an odd number of arrow reversing vertices (the $5$th and
$6$th types, see Fig.~\ref{fig:six-vertices}), that is, of the $c$-weight vertices. Therefore,
the weight of every configuration is a polynomial of the variables $x_1,\ldots,x_N$,
$y_1,\ldots,y_N$, divided by the product $\prod_{j = 1}^N(x_jy_j)^{(N-1)/2}$.

We apply the method developed in the rational case to the function $\wt Z_N$.
We begin with the counterpart of Proposition \ref{prop:props-rat}.

\begin{proposition}\label{prop:props-trig}
The partition function $\wt Z_N (\{x\})$ as a function of the variables $x_1,\dots,x_N$
has the following properties:
\begin{enumerate}
\item
It is a symmetric polynomial in $x_1,\dots,x_N$;
\item
The degree in each of the variables $x_1,\dots,x_N$ equals $N - 1$;
\item
For each pair of variables, say $x_1$, $x_2$, one has
\begin{equation}\label{eqyy}
\wt Z_N(y_j, q^{-2} y_j , x_3, \dots, x_N) = 0, \qquad j = 1, \dots, N;
\end{equation}
\item
As $\{x\}=\{y\}$, the following holds:
\begin{equation}\label{Zyy}
\wt Z_N(\{y\}) = (q-q^{-1})^N\prod_{\substack{j,k=1\\j\neq k}}^N (qy_j-q^{-1}y_k).
\end{equation}
\end{enumerate}
\end{proposition}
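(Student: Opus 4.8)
The plan is to reduce Proposition~\ref{prop:props-trig} to the already-established properties of the partition function expressed in the original variables, together with the substitution $x_j=q^{2\lambda_j}$, $y_k=q^{2\nu_k}$. The key technical point is relation~\eqref{ZwtZ}, which writes $Z_N$ as $\wt Z_N$ divided by a fixed monomial factor; so one has to carefully transfer each of the four properties of $Z_N$ across this division while accounting for the change of variables and the Laurent-versus-polynomial subtlety.

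First I would treat properties~(1) and~(2) together. The monodromy-matrix argument that proved the first two parts of Proposition~\ref{prop:props-rat} applies verbatim: from the explicit form of the $L$-operator~\eqref{eq:L-rat} with the weights~\eqref{abc-xyq}, the matrix element $\bDa B(\lambda_N)\cdots B(\lambda_1)\kUa$ is, for each $\lambda_j$, a Laurent polynomial in $x_j^{1/2}$; more precisely, since each vertical line carries exactly one $c$-vertex parity class and the $a,b$-weights contribute $q x_j - y_k q^{-1}$ or $x_j - y_k$ up to the common factor $(2\rmi)^{-1}(x_jy_k)^{-1/2}$, the dependence of $Z_N$ on $x_j$ has the shape (monomial in $x_j^{-(N-1)/2}$) times (polynomial in $x_j$ of degree $N-1$). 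Multiplying by the factor $(2\rmi)^{N^2}\prod_j (x_jy_j)^{(N-1)/2}$ in~\eqref{ZwtZ} clears precisely these monomials, so $\wt Z_N$ is an honest polynomial in $x_1,\dots,x_N$ of degree $N-1$ in each. Symmetry in the $x_j$ follows from commutativity of the $B$-operators~\eqref{eq:BB}, exactly as before, because the prefactor in~\eqref{ZwtZ} is symmetric.

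For property~(3) I would invoke Lemma~\ref{prop:null}: the vector $B(\nu_j)B(\nu_j-1)\kUa$ vanishes, hence $Z_N(\nu_j,\nu_j-1,\lambda_3,\dots,\lambda_N)=0$ for every $j$. Under $x_j=q^{2\lambda_j}$ the specialization $\lambda_1=\nu_j$, $\lambda_2=\nu_j-1$ becomes $x_1=y_j$, $x_2=q^{-2}y_j$, and since the prefactor in~\eqref{ZwtZ} is nonzero at such a point, $\wt Z_N(y_j,q^{-2}y_j,x_3,\dots,x_N)=0$, which is~\eqref{eqyy}. Property~(4) is the reduction~\eqref{Znunu}: one has $Z_N(\{\nu\})=\prod_{j,k}(\nu_j-\nu_k+1)=\prod_{j,k} a(\nu_j,\nu_k)$ in rational notation, and translating $a(\nu_j,\nu_k)$ via~\eqref{abc-xyq} gives $\prod_{j,k}\bigl(q y_j - q^{-1}y_k\bigr)\big/\bigl((2\rmi)^{N^2}\prod_j (y_j\cdot\prod_k y_k^{?})\ldots\bigr)$; isolating the diagonal terms $a(\nu_j,\nu_j)=c=(q-q^{-1})/(2\rmi)$ from the off-diagonal ones and multiplying back by the prefactor in~\eqref{ZwtZ} should yield exactly $(q-q^{-1})^N\prod_{j\ne k}(qy_j-q^{-1}y_k)$. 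Concretely: $Z_N(\{\nu\})=\prod_{j,k}a(\nu_j,\nu_k)$ equals $\bigl(\prod_j c\bigr)\prod_{j\ne k}a(\nu_j,\nu_k)$, and using~\eqref{abc-xyq} with $\{x\}=\{y\}$ each factor $a(\nu_j,\nu_k)=(qy_j-q^{-1}y_k)/(2\rmi (y_jy_k)^{1/2})$; collecting the $(2\rmi)$'s and the monomials $(y_jy_k)^{1/2}$ over all $j,k$ reproduces precisely the denominator in~\eqref{ZwtZ}, leaving $\wt Z_N(\{y\})$ as claimed.

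The only real obstacle is bookkeeping: verifying that the powers of $(x_jy_k)^{1/2}$ collected over all $N^2$ vertices (property 4) and over a single line (properties 1--2) match the exponent $(N-1)/2$ appearing in~\eqref{ZwtZ}, and checking that half-integer powers genuinely cancel rather than merely reorganize. This is the same parity count already used to justify~\eqref{ZwtZ} (each line has an odd number of $c$-vertices), so I would simply point back to that paragraph rather than redo it. Everything else is an automatic transcription of Propositions~\ref{prop:props-rat} and Lemma~\ref{prop:null} through the substitution.
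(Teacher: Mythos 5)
Your proposal is correct and takes essentially the same route as the paper, whose proof of this proposition consists of the single remark that it is ``completely similar to the rational case'': you transfer properties (1)--(2) from the structure of the $L$-operator and the commutativity of the $B$-operators, property (3) from Lemma~\ref{prop:null}, and property (4) from the spin-flip computation \eqref{BNUp}, all through the substitution $x_j=q^{2\lambda_j}$, $y_k=q^{2\nu_k}$ and the prefactor in \eqref{ZwtZ}. The only imprecision is that the $x_j$-dependence of a single configuration is $x_j^{-(N-m_j)/2}$ times a polynomial with $m_j$ the (odd, configuration-dependent) number of $c$-vertices on line $j$, not uniformly $x_j^{-(N-1)/2}$, but since $m_j\ge 1$ this still clears to a polynomial of degree at most $N-1$ after multiplying by $x_j^{(N-1)/2}$, exactly as in the parity count you cite.
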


\begin{proof}
The proof is completely similar to the rational case, see Proposition \ref{prop:props-rat}.
\end{proof}

Now we show that the system of equations \eqref{eqyy}, \eqref{Zyy} has a unique solution.

\begin{proposition}\label{theorem:dim-trig}
Let $P_N (\{x\})$ be a polynomial in the variables $x_1,\dots,x_N$,
depending on the parameters $\{y\}=y_1,\ldots,y_N$, that has the following properties:
\begin{enumerate}
\item
It is a symmetric polynomial in $x_1,\dots,x_N$;
\item
The degree in each of the variables $x_1,\dots,x_N$ equals $N - 1$;
\item
For each pair of variables, say $x_1,x_2$, one has
\begin{equation}\label{Peqyy}
P_N(y_j, q^{-2} y_j , x_3, \dots, x_N) = 0, \qquad j = 1, \dots, N;
\end{equation}
\item
As $\{x\}=\{y\}$, the following holds:
\begin{equation}\label{Pyy}
P_N(\{y\}) = (q-q^{-1})^N\prod_{\substack{j,k=1\\j\neq k}}^N (qy_j-q^{-1}y_k).
\end{equation}
\end{enumerate}
Then, $P_N (\{x\})=\wt Z_N (\{x\})$.
\end{proposition}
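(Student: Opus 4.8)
The plan is to repeat \emph{verbatim} the strategy used for Proposition~\ref{theorem:dim-rat}, with the additive shift $\nu\mapsto\nu-1$ replaced by the multiplicative one $y\mapsto q^{-2}y$ and interpolation at the points $\{\nu\}$ replaced by interpolation at the points $\{y\}$. I would argue by induction on $N$. The case $N=1$ is immediate: a polynomial of degree at most $0$ in $x_1$ is a constant, and \eqref{Pyy} pins it down to $q-q^{-1}=\wt Z_1$. For the inductive step, since $\deg_{x_1}P_N\leq N-1$ I expand $P_N$ in $x_1$ by Lagrange interpolation at the (generically pairwise distinct) nodes $y_1,\dots,y_N$,
\begin{equation*}
P_N(\{x\})=\sum_{i=1}^{N}P_N(y_i,x_2,\dots,x_N)\prod_{\substack{j=1\\ j\neq i}}^{N}\frac{x_1-y_j}{y_i-y_j},
\end{equation*}
which reduces everything to understanding the coefficients $P_N(y_i,x_2,\dots,x_N)$.

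By property~(3), that is \eqref{Peqyy}, together with the symmetry of $P_N$ in $x_1,\dots,x_N$, the polynomial $P_N$ vanishes whenever two of its arguments are $y_i$ and $q^{-2}y_i$. Hence $P_N(y_i,x_2,\dots,x_N)$ vanishes at $x_k=q^{-2}y_i$ for each $k=2,\dots,N$, and is therefore divisible by $\prod_{k=2}^{N}(x_k-q^{-2}y_i)$. Factoring out this product, together with a suitable nonzero constant depending only on $\{y\}$, I write
\begin{equation*}
P_N(y_i,x_2,\dots,x_N)=q^{N-1}(q-q^{-1})\Bigl(\prod_{k\neq i}(qy_i-q^{-1}y_k)\Bigr)S_{N-1,i}(x_2,\dots,x_N)\prod_{k=2}^{N}\bigl(x_k-q^{-2}y_i\bigr)
\end{equation*}
for some polynomial $S_{N-1,i}$ in $x_2,\dots,x_N$; a degree count gives $\deg_{x_k}S_{N-1,i}\leq N-2$, and symmetry in $x_2,\dots,x_N$ is inherited from $P_N$.

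Then I would check that, viewed with the reduced parameter set $\{y\}\setminus y_i$, the polynomial $S_{N-1,i}$ satisfies the hypotheses of the proposition with $N$ replaced by $N-1$. Symmetry and the degree bound were just noted. The vanishing \eqref{Peqyy} for $S_{N-1,i}$ follows from $P_N(y_i,y_j,q^{-2}y_j,x_4,\dots,x_N)=0$ for $j\neq i$ (again by \eqref{Peqyy} and symmetry of $P_N$), after cancelling the product $\prod_{k=2}^{N}(x_k-q^{-2}y_i)$, which is nonzero at the relevant generic specialization. The normalization \eqref{Pyy} for $S_{N-1,i}$, namely
\begin{equation*}
S_{N-1,i}(\{y\}\setminus y_i)=(q-q^{-1})^{N-1}\prod_{\substack{j,k=1\\ j,k\neq i\\ j\neq k}}^{N}(qy_j-q^{-1}y_k),
\end{equation*}
is exactly what the constant in the displayed factorization was engineered to give: set $\{x\}=\{y\}$, substitute $x_1=y_i$, use \eqref{Pyy}, rewrite $y_k-q^{-2}y_i=q^{-1}(qy_k-q^{-1}y_i)$, and split the double product $\prod_{j\neq k}(qy_j-q^{-1}y_k)$ into the blocks $j=i$, $k=i$, and $j,k\neq i$. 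By the induction hypothesis and Proposition~\ref{prop:props-trig} applied with $N-1$ inhomogeneities, $S_{N-1,i}=\wt Z_{N-1}(x_2,\dots,x_N;\{y\}\setminus y_i)$, so the Lagrange expansion above determines $P_N(\{x\})$ uniquely; since $\wt Z_N$ has properties (1)--(4) by Proposition~\ref{prop:props-trig}, we conclude $P_N=\wt Z_N$.

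The only substantive work, exactly as in the rational case, is the bookkeeping in the last two steps: choosing the $\{y\}$-dependent prefactor so that the degree count, the vanishing conditions, and the reduced normalization all fit together at once. The multiplicative structure makes tracking the powers of $q$ a little more delicate than the additive shifts of Proposition~\ref{theorem:dim-rat}, but there is no conceptual obstacle; and since every identity involved is polynomial in $y_1,\dots,y_N$, it suffices to work with generic $\{y\}$ (and $q$ not a root of unity), the general case following by continuity.
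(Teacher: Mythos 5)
Your proposal is correct and follows essentially the same route as the paper: induction on $N$, Lagrange interpolation in $x_1$ at the nodes $y_1,\dots,y_N$, factoring $P_N(y_i,x_2,\dots,x_N)$ using \eqref{Peqyy} and symmetry, and verifying that the reduced polynomial satisfies the hypotheses with parameters $\{y\}\setminus y_i$. Your factorization with $q^{N-1}\prod_{k=2}^{N}(x_k-q^{-2}y_i)$ is identical to the paper's $\prod_{k=2}^{N}(qx_k-q^{-1}y_i)$, and the normalization bookkeeping matches \eqref{Syy}.
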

\begin{proof}
The proof is the same as that of Proposition \ref{theorem:dim-rat}.
We reproduce it in the condensed form for clarity.

The proof is by induction on $N$. The base of induction at $N=1$ is clear.
For the induction step, we expand the polynomial $P_N (\{x\})$ in the variable $x_1$
\begin{equation}\label{PxL1}
P_N (\{x\})= \sum_{i = 1}^N P_N (y_i, x_2, \dots, x_N)
\prod_{\substack{j = 1\\ j \neq i}}^{N}\frac{x_1 - y_j }{y_i - y_j},
\end{equation}
and use \eqref{Peqyy} to get
\begin{align}\label{Pyi}
P_N (y_i, x_2, \dots, x_N) &{}= S_{N-1,i} (x_2, \dots, x_N)
\notag
\\
& {}\times (q-q^{-1})\,
\prod_{k = 2}^N (qx_k - q^{-1}y_i)\,\prod_{\substack{j=1\\j\neq i}}^N (qy_i - q^{-1}y_j),
\end{align}
$i=1,\ldots,N$, for some polynomials $S_{N-1,i} (x_2, \dots, x_N)$ in $x_2, \dots,x_N$.

For each $i=1,\ldots,N$, the polynomial $S_{N-1,i} (x_2, \dots,x_N)$ has the properties:
\begin{enumerate}
\item
It is symmetric in $x_2, \dots, x_N$;
\item
The degree in each of the variables $x_2, \dots, x_N$ equals $N - 2$;
\item
For each pair of the variables, say $x_2,x_3$,
one has
\begin{equation}\label{Seqyy}
S_{N-1,i}(y_j, q^{-2}y_j,x_4, \dots, x_N) =0,\qquad j = 1, \dots, i-1, i+1,\ldots, N;
\end{equation}
\item
The following holds:
\begin{equation}\label{Syy}
S_{N-1,i}(\{y\}\setminus y_i) = (q-q^{-1})^{N-1}
\prod_{\substack{j,k=1\\j,k\ne i,\,j\neq k}}^N (qy_j-q^{-1}y_k),
\end{equation}
where we use the notation
$\,\{y\}\setminus y_i:=y_1,\ldots,y_{i-1},y_{i+1},\ldots,y_N$.
\end{enumerate}
Hence by the induction assumption, for each $i=1,\ldots,N$, one has
\begin{equation}\label{SZXy}
S_{N-1,i} (x_2, \dots, x_N) = \wt Z_{N-1} (x_2, \dots, x_N;\{y\}\setminus y_i),
\end{equation}
where we indicate in the right-hand side the dependence on the parameters
$\{y\}$ explicitly.

Formulas \eqref{PxL1}, \eqref{Pyi}, \eqref{SZXy} show that $P_N(\{x\})$ is uniquely
determined by its properties (1)\,--\,(4). Therefore, $P_N(\{x\})=\wt Z_N (\{x\})$
by Proposition \ref{prop:props-trig}.
\end{proof}

\subsection{Determinant representation}\label{sec52}

Our main result in the trigonometric case can be formulated as follows.

\begin{theorem}\label{theorem:det-trig}
The partition function of the six-vertex model with DWBC
with the trigonometric weights \eqref{eq:weights-trig} can be written in the form
\begin{equation}\label{det-trig}
\wt Z_N(\{x\})=\frac{\det\left[
p_k (x_j) \tilde a (x_j)-q^{-N+2} p_k (q^2x_j)\tilde d (x_j)\right]_{j,k=1,\ldots,N}}
{\left(\prod_{k=1}^Ny_k\right)\det[p_k (x_j)]_{j,k=1,\ldots,N}},
\end{equation}
and in the form
\begin{equation}\label{det-trig2}
\wt Z_N(\{x\})=\frac{\det\left[
p_k (x_j) \tilde a (x_j)-q^{-N} p_k (q^2x_j) \tilde d (x_j)\right]_{j, k=1,\ldots, N}}
{\left(\prod_{j=1}^Nx_j\right)\det[p_k (x_j)]_{j,k=1,\ldots,N}}.
\end{equation}
Here $p_1 (x), \dots, p_N (x)$ form a basis of polynomials in $x$ of degree $N-1$,
and $\tilde a(x)$, $\tilde d(x)$ are
\begin{equation}
\tilde a(x)=\prod_{k=1}^{N}\left(q x-q^{-1}y_k\right),\qquad
\tilde d(x)=\prod_{k=1}^{N}\left(x-y_k\right).
\end{equation}
Notice formula \eqref{detp=QV},
\begin{equation}
\det[p_k (x_j)]_{j,k=1,\ldots,N}=Q_N(\{p\})\prod_{1\le j<k\le N}(x_k-x_j).
\end{equation}
\end{theorem}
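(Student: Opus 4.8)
The plan is to follow the proof of Theorem~\ref{theorem:det-rat} in the trigonometric setting: I denote by $P_N(\{x\})$ the right-hand side of \eqref{det-trig}, fix the parameters $\{y\}$ with all $y_k$ distinct and nonzero, and check that $P_N$ satisfies properties (1)--(4) of Proposition~\ref{theorem:dim-trig}; then $P_N=\wt Z_N$ by that proposition together with Proposition~\ref{prop:props-trig}, and since both sides are rational functions of $\{y\}$ the identity extends to all values of the parameters. Property~(1) is immediate: the numerator of \eqref{det-trig} is antisymmetric in $x_1,\dots,x_N$, hence divisible by $\prod_{1\le j<k\le N}(x_k-x_j)$, which by \eqref{detp=QV} coincides with $\det[p_k(x_j)]_{j,k}$ up to the nonzero scalar $Q_N(\{p\})$, while the leftover factor $\prod_{k=1}^N y_k$ in the denominator is a nonzero constant in $\{x\}$, so $P_N$ is a symmetric polynomial in $x_1,\dots,x_N$.

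Property~(2) is the place where the exact power of $q$ is used, and I expect it to be the only delicate point. Writing $p_k(x)=\sum_i p_{ki}x^{i-1}$, the polynomial $\tilde a(x)$ has leading term $q^Nx^N$, $\tilde d(x)$ is monic of degree $N$, and $p_k(x)$, $p_k(q^2x)$ have degree at most $N-1$ with leading coefficients $p_{kN}$ and $p_{kN}q^{2N-2}$; hence in $p_k(x)\tilde a(x)-q^{-N+2}p_k(q^2x)\tilde d(x)$ the two contributions to $x^{2N-1}$ are $p_{kN}q^Nx^{2N-1}$ and $q^{-N+2}p_{kN}q^{2N-2}x^{2N-1}$, which are equal and cancel. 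Thus each entry of the numerator matrix has degree at most $2N-2$ in its $x_j$, and after division by the Vandermonde factor $P_N$ has degree at most $N-1$ in each variable; symmetry then forces the degree to equal $N-1$.

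For property~(3) I would substitute $x_1=y_j$, $x_2=q^{-2}y_j$ and use $\tilde d(y_j)=0$ together with $\tilde a(q^{-2}y_j)=q^{-N}\prod_k(y_j-y_k)=0$, so that the first two rows of the numerator matrix become $\bigl(p_k(y_j)\tilde a(y_j)\bigr)_k$ and $\bigl(-q^{-N+2}p_k(y_j)\tilde d(q^{-2}y_j)\bigr)_k$, both proportional to $\bigl(p_k(y_j)\bigr)_k$, so the determinant vanishes. For property~(4) I would set $\{x\}=\{y\}$, use $\tilde d(y_j)=0$ to reduce the numerator to $\bigl(\prod_j\tilde a(y_j)\bigr)\det[p_k(y_j)]_{j,k}$, split off the $k=j$ factor as $\tilde a(y_j)=y_j(q-q^{-1})\prod_{k\ne j}(qy_j-q^{-1}y_k)$, and cancel $\prod_j y_j$ against the denominator and $\det[p_k(y_j)]_{j,k}$ (nonzero by \eqref{detp=QV}) against itself, arriving exactly at \eqref{Pyy}. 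By Proposition~\ref{theorem:dim-trig} this proves \eqref{det-trig}.

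Finally, \eqref{det-trig2} is handled identically, with one change: with the power $q^{-N}$ the $x^{2N-1}$ terms in the matrix entries no longer cancel, but instead the $j$th row of the numerator matrix vanishes at $x_j=0$, since $\tilde a(0)-q^{-N}\tilde d(0)=(-1)^Nq^{-N}\prod_k y_k-q^{-N}(-1)^N\prod_k y_k=0$; hence the numerator, being divisible both by the Vandermonde factor (by antisymmetry) and by $\prod_j x_j$, and these being coprime, yields after the division a polynomial $P_N$ whose degree in each $x_j$ is again at most $N-1$. Properties~(3) and~(4) carry over verbatim, with the denominator $\prod_j x_j$ specializing to $\prod_j y_j$ when $\{x\}=\{y\}$, and Proposition~\ref{theorem:dim-trig} completes the proof. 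The only real obstacle in the whole argument is the degree bookkeeping in property~(2): recognizing the cancellation of leading coefficients for \eqref{det-trig} and the vanishing at $x_j=0$ for \eqref{det-trig2}; everything else is a routine transcription of the rational case.
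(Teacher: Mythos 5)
Your proof is correct and follows exactly the route the paper intends (the paper's own proof of Theorem~\ref{theorem:det-trig} merely says it is ``similar to the rational case''); in particular you correctly supply the two points the paper leaves implicit, namely the cancellation of the $x_j^{2N-1}$ coefficients in the entries of \eqref{det-trig} and the vanishing of the $j$th row of \eqref{det-trig2} at $x_j=0$, which together give the degree bound, as well as the row-proportionality and specialization checks for properties (3) and (4). One cosmetic remark: symmetry does not by itself force the degree to \emph{equal} $N-1$, but the bound ``at most $N-1$'' is all that the Lagrange-interpolation argument in Proposition~\ref{theorem:dim-trig} actually uses, so nothing is lost.
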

\begin{proof}
The proof is similar to the rational case. We verify that the right-hand sides of
formulas \eqref {det-trig} and \eqref{det-trig2} obeys properties (1)\,--\,(4) stated
in Proposition~\ref{prop:props-trig}. This is done along the same lines as in the proof
of Theorem \ref{theorem:det-rat}.
\end{proof}

Note that the functions $\tilde a(x)$ and $\tilde d(x)$ are related to vacuum eigenvalues
of the $A$- and $D$-operators,
\begin{equation}\label{aadd}
a(\lambda)=\frac{\tilde a(x)}{(2\rmi)^N x^{N/2}\prod_{j=1}^Ny_j^{1/2}},\qquad
d(\lambda)=\frac{\tilde d(x)}{(2\rmi)^N x^{N/2}\prod_{j=1}^Ny_j^{1/2}},\qquad
x=q^{2\lambda}.
\end{equation}

Let us now turn to examples. The first example
to consider is determinant formula \eqref{eq:Z-IK} in the trigonometric case.
To get it, we choose $p_1(x),\ldots\,p_N(x)$ to be the interpolating polynomials
of degree $N-1$ for the points $y_1,\ldots,y_N$,
\begin{equation}
p_k (x)
= \prod_{\substack{l = 1\\l \neq k}}^N\,(x-y_l),\qquad k=1,\ldots,N.
\end{equation}
Since $p_k(x)=\tilde d(x)/(x-y_k)$, $\,p_k(q^2x)=q^{N-1}\tilde a(x)/(qx-q^{-1}y_k)$, we have
\begin{multline}
p_k (x_j)\tilde a (x_j)- q^{-N+2}p_k (q^2x_j)\tilde d (x_j)
\\
=\frac{\tilde d(x_j)\tilde a (x_j)}{x_j-y_k}-
\frac{q\tilde a(x_j)\tilde d (x_j)}{qx_j-q^{-1}y_k}
=\frac{(q-q^{-1})\,y_k\,\tilde a(x_j)\tilde d (x_j)}{(x_j-y_k)\left(qx_j-q^{-1}y_k\right)}
\end{multline}
and
\begin{multline}
p_k (x_j)\tilde a(x_j) - q^{-N}p_k (q^2x_j)\tilde d (x_j)
\\
=\frac{\tilde d(x_j)\tilde a (x_j)}{x_j-y_k}-
\frac{q^{-1}\tilde a(x_j)\tilde d (x_j)}{qx_j-q^{-1}y_k}
=\frac{(q-q^{-1})\,x_j\,\tilde a(x_j)\tilde d (x_j)}{(x_j-y_k)\left(qx_j-q^{-1}y_k\right)}.
\end{multline}
Using these formulas to transform the numerators in formulas \eqref{det-trig},
\eqref{det-trig2}, and formulas \eqref{detp=QV}, \eqref{IKden} to deal with the denominators,
we get in both cases
\begin{equation}\label{wtZfin}
\wt Z_N(\{x\})=
\frac{\prod_{j=1}^N \tilde a(x_j)\tilde d(x_j)}{\prod_{1\leq j<k\leq n}(x_k-x_j)(y_j-y_k)}
\det\left[\frac{q-q^{-1}}{(x_j-y_k)(qx_j-q^{-1}y_k)}\right]_{j,k=1,\ldots,N}.
\end{equation}
Taking into account formulas \eqref{Vander-trig}, \eqref{eq:a-and-d}, \eqref{abc-xyq}, 
and \eqref{ZwtZ}, the last equality
yields formula \eqref{eq:Z-IK} in the trigonometric case.

The second example is the choice of $p_1(x),\ldots,p_N(x)$ as monomials,
\begin{equation}
p_k (x) = x^{k - 1},\qquad k=1,\ldots,N,
\end{equation}
In this case, representation \eqref{det-trig} gives formula \eqref{eq:Z-K-trig}
and representation \eqref{det-trig2} gives formula \eqref{eq:Z-K-trig2}.


\section{Acknowledgments}

The first and second authors (M.D.M. and A.G.P.) are supported in part by the
BASIS Foundation, grant \#21-7-1-32-3. The first author (M.D.M.)
also acknowledges partial support from the
Leonhard Euler International Mathematical Institute,
agreement \#075--15--2019--1620.

\bibliography{mpt_bib}
\end{document}